\newtheorem{definition}{Definition}
\newtheorem{theorem}{Theorem}
\newtheorem{proposition}{Proposition}
\DeclarePairedDelimiter{\ceil}{\lceil}{\rceil}
\DeclarePairedDelimiter{\floor}{\lfloor}{\rfloor}
\begin{document}

\title{Gini Index based Initial Coin Offering Mechanism}


\author{
  Mingyu Guo, Zhenghui Wang\\
  School of Computer Science\\
  University of Adelaide\\
  Australia\\
   \And
   Yuko Sakurai\\
   National Institute of Advanced\\
   Industrial Science and Technology\\
   Japan\\
}

\maketitle

\begin{abstract} As a fundraising method, Initial Coin Offering (ICO) has
    raised billions of dollars for thousands of startups in the past two years.
    Existing ICO mechanisms place more emphasis on the short-term benefits of
    maximal fundraising while ignoring the problem of unbalanced token
    allocation, which negatively impacts subsequent fundraising and has bad
    effects on introducing new investors and resources. We propose a new ICO
    mechanism which uses the concept of Gini index for the very first time as a
    mechanism design constraint to control allocation inequality.  Our
    mechanism maintains an elegant and straightforward structure.  It allows
    the agents to modify their bids as a price discovery process, while
    limiting the bids of whales. We analyze the agents' equilibrium behaviors
    under our mechanism. Under natural technical assumptions, we show that most
    agents have simple dominant strategies and the equilibrium revenue
    approaches the optimal revenue asymptotically in the number of agents.  We
    verify our mechanism using real ICO dataset we collected, and confirm that
    our mechanism performs well in terms of both allocation fairness and
    revenue.  \end{abstract}

\keywords{Mechanism Design; Initial Coin Offering; Gini Index}

\section{Introduction}\label{sec:intro} As a primary fundraising tool for
startups, Initial Coin Offering (ICO) is very eye-catching in the capital
market. According to~\cite{Icobench2018:ICO}, $718$ ICOs ended in 2017 and
raised around $10$ billions (USD) in total. By 2018, the ICO fundraising market
has grown significantly. The total fundraising for the year reached $11.5$
billions (USD), and the number of ended ICOs increased to $2517$.

Despite the popularity and the huge amounts of funds being raised, popular ICO
mechanisms are surprisingly unsophisticated.  Vitalik
Buterin~\cite{Buterin2017:Analyzing} analyzed some token sale models and
claimed that an optimal token sale model has not been discovered yet.  There
are two commonly used ICO mechanisms. One is simply the fixed price mechanism.
There are two variants of this mechanism: uncapped sales and capped sales.
Uncapped sales does not limit the number of coins sold. The aim is to accept as
much capital and as many investors into the project as possible. Capped sales
sells a fixed number of coins, and cuts off additional investment once all
coins are sold.  The fixed price mechanism does not offer a price discovery
process. Many cryptocurrencies released via ICOs have different degrees of appreciation after their tokens
are listed on exchanges.  The average ICO is under priced by
$8.2\%$~\cite{Momtaz2018:Initial}, while some tokens are appreciated by more
than $1000\%$~\cite{Sanchez2017:Optimal}.  Another commonly used mechanism is
the Dutch auction, which offers a much better price discovery process. It
starts the auction with a high price, which keeps decreasing until enough
participants are willing to purchase all coins according to the market price.
The Dutch auction also has its drawbacks. \emph{E.g.}, it requires the
participants actively monitor the auction progress.


For both the fixed price mechanism with capped sales and the Dutch auction,
\emph{whales} (large investors) can end the auction early by putting in large
investments, which takes away the investment opportunities from the smaller
investors.  Wealth inequality is a significant problem in the cryptocurrency
community, which goes against the principle of decentralization, especially if
the coin is based on proof-of-stake (that favors richer holders).  It is also
worth mentioning that the tokens sold in an ICO normally amount to no more than
$50\%$ of the tokens. The unsold tokens are still in the hands of the release
team. The hidden danger of sharp depreciation (dumping large amounts of coins)
caused by whales is a vital threat for the development team.

Roubini~\cite{Roubini2018:Testimony} testified in a hearing of the US Senate
Committee on Banking, Housing and Community Affairs: ``\textit{wealth in
crypto-land is more concentrated than in North Korea where the inequality Gini
coefficient is $0.86$ (it is $0.41$ in the quite unequal US): the Gini
coefficient for Bitcoin is an astonishing $0.88$.}''  Many cryptocurrency
buyers expressed dissatisfaction to the wealth gap on Internet forums,
calling ``whale-sale completely against the ethos of Ethereum and
Cryptocurrency''~\cite{Wit222019:BAT}.  We collected the data from a few ICOs.
Based on our calculation, the Gini index of token allocation in most of the
ICOs is higher than $0.88$.  For example, for Gnosis (with a Gini index of $0.92$), the top $5\%$ of the users
invested $78\%$. In sharp contrast, the bottom $64\%$ of the users invested only
$4\%$ in total! The top two investors invested $31\%$ and $15\%$, respectively.

As a means of balance, in this paper, we introduce a Gini index based ICO
mechanism to achieve a more balanced token allocation by limiting whales.
Since Corrado Gini proposed the concept of Gini index for the first time in
1912, Gini index has been widely used to measure wealth gap. Gini index was
also studied recently in mechanism design~\cite{Sinha2015:Mechanism}. The
authors used Gini index to evaluate allocation fairness in simulations. In our
paper, the Gini index is a mechanism design constraint. Our analysis is based
on the specific mathematical structure of the Gini index.

The Gini index may not be meaningful in environments where false name
bids~\cite{Yokoo2004:effect} are possible, as a whale may simply divide her
investment and participate via multiple accounts.  Existing ICOs in practise do
often involve the \emph{Know Your Customer (KYC)} process and only the users in the \emph{whitelist} can join the ICO. KYC is a process in
which a business verifies its customers' identities, as well as the risk of
illegal intent. KYC processes are also being used to comply with anti-bribery
and other regulations (\emph{e.g.}, it is illegal to
sell cryptocurrencies to customers from countries/territories where
cryptocurrencies are banned).  The US Securities and Exchange Commission (SEC)
has announced the intent to pursue ICOs without a proper KYC processes. 
Besides document
verification, existing KYC processes use manual approaches like video
conversation to verify customers' identities.



We propose the \emph{Gini mechanism}, which has a list of desirable properties:

\begin{itemize}

    \item The Gini mechanism is a simple prior-free mechanism with elegant
        description.  It is a key advantage for practical applications.
        Data related to ICO events are extremely volatile. It is unrealistic to
        assume a known prior distribution. Therefore, computational approaches
        (\emph{e.g.}, deep learning based mechanism
        design~\cite{Duetting2019:Optimal,Shen2019:Automated} and classic
        automated mechanism design~\cite{Conitzer2002:Complexity}) are not
        possible.

    \item The Gini mechanism offers a price discovery process.  We believe any
        mechanism that asks for a valuation is not realistic for ICOs.  Before
        an ICO starts, the agents have difficulty in the valuation
        of the coin.  The value of a cryptocurrency depends heavily on its
        popularity.  The Gini mechanism only asks for the agents' budgets
        (\emph{i.e.}, how much do you plan to invest). Based on the budgets,
        the mechanism calculates a price.  The Gini mechanism has the property
        that it produces higher prices if there are higher budgets and more agents. That
        is, the more popular the coin is, the higher the price gets. During the running of the mechanism,
        the agents have the ability to adjust their budgets. For example, if the
        price is too high for an agent, then she could pull out (reduce her
        budget to $0$).  If the price gets too low, then the agent would max
        out her investment (honestly report her maximum budget). 
        We show
        that a pure strategy
        $(\epsilon_1,\epsilon_2,\ldots,\epsilon_n)$-equilibrium exists. Under
        this equilibrium, agent $i$ is playing a pure strategy (offering a
        single budget) and this is at most $\epsilon_i$ away from her best
        response in terms of utility. The $\epsilon_i$ depend on the actual
        data. In our experiments, the $\epsilon_i$ are always \emph{tiny
        compared to the agents' maximum budgets}.

    \item For most of the agents, 
        the strategies are straight-forward.
        They either max out their budgets or completely pull out of the
        investment.  Only a handful of agents have room for calculated
        equilibrium behaviors --- they face the situations where investing too
        much results in a price that is too high, while investing too little
        results in little utility gain. A carefully chosen budget can
        create a nice balance between the price increment and her
        marginal utility gain.

    \item The mechanism produces nearly optimal revenue in experiments. In
        terms of theoretical guarantee, \emph{asymptotically} (when the number
        of agents goes to infinity), under natural technical assumptions, the
        mechanism's revenue converges to the optimal revenue.

\end{itemize}

As a last but not least contribution of this paper, we plan to release our
collected dataset, which can be used to analyze user behaviors in ICOs and
for agent-based simulations of ICOs.  For convenience, in our datasets, all
monetary amounts are provided in both Ether and USD, based on real-time
exchange rate at the time of the transactions.  Our dataset consists of six
popular ICOs, collected according to the following selection rules:

\begin{description}
   \item[Rule 1:] More than 10 millions (USD) raised.
   \item[Rule 2:] Using the Dutch auction as the ICO model.
   \item[Rule 3:] The number of transactions is more than 500.
\end{description}





\section{Model Description} The unit size of digital currencies tend to be tiny.
For example, \emph{satoshi} is the name for the smallest unit of
bitcoin, which equals one hundred millionth of a bitcoin. In our model, we
treat coins as \textbf{infinitely divisible}, which allows us to normalize the
number of coins to $1$.  That is, we are selling \textbf{one divisible item
(one coin)} to $n$ agents.  For presentation purposes, we say that agent $i$
receives $a_i$ coin ($0\le a_i\le 1$) if she receives $a_i$ fraction of the
coin.  Agent $i$'s type is denoted as $(v_i, b_i)$, where $v_i$ is her
valuation for the coin\footnote{If the whole coin reserve is released via the
ICO mechanism, then an agent's valuation is essentially her view of the
\emph{market cap}.} and $b_i$ is her budget. For now, we defer any discussion
on the difference between an agent's private true budget limit and her reported
budget limit. We will discuss the agents' strategies in Section~\ref{sec:equilibrium}.

Since we are selling currencies, we believe all agents should face the same
\emph{exchange rate}. In an ICO mechanism, the agents would pay monetary
payments (\emph{e.g.}, USD or other cryptocurrencies like Ether) in exchange of
a fraction of the coin.  Let $i$ and $j$ be two agents, each receiving $a_i$
and $a_j$ coin, and each paying $c_i$ and $c_j$. We should have
$\frac{c_i}{a_i}=\frac{c_j}{a_j}$. This exchange ratio can also be interpreted
as the \emph{price} of the coin.  Essentially, we require that our
mechanism offers the same price $p$ to all agents. If we charge $c$ from an
agent, then this agent should receive $\frac{c}{p}$ coin.

For agent $i$, having a budget of $b_i$ does not necessarily mean that the
mechanism will charge her exactly $b_i$.  Let $c_i$ be agent $i$'s actual
spending under the mechanism. The budget constraint is $0\le c_i\le b_i$. Agent
$i$ receives $\frac{c_i}{p}$ coin and her utility equals \[
    \frac{c_i}{p}(v_i-p) = \frac{c_iv_i}{p}-c_i \]
The agents aim to maximize their utilities. Again, we defer any discussion
on strategies to Section~\ref{sec:equilibrium}.

A mechanism outputs the price $p$ and an allocation $(a_1,a_2,\ldots,a_n)$.
$a_i\ge 0$ for all $i$ and $\sum a_i=1$.  An allocation is feasible only if it
honours the budget constraint: $a_ip\le b_i$ for all $i$.

We also introduce a new mechanism design constraint to limit the
degree of inequality in our allocations. The popular \emph{Gini index}
is used to measure allocation inequality. We set a constant \emph{Gini cap} $g$ on the
Gini index. That is, a feasible allocation's Gini index should never exceed
$g$.  The Gini cap is a mechanism parameter chosen by the mechanism designer, with $0<g<1$.
A higher Gini cap implies that we have a higher tolerance on allocation
inequality.

The standard way to compute the Gini index is as follows.  Let
$(a_1,a_2\ldots,a_n)$ be the allocation.  We sort the $a_i$ in ascending order
to obtain the $y_i$. So $y_1$ is the smallest value among the $a_i$ and $y_n$
is the largest value among the $a_i$.  The Gini index equals
\[ \frac{2\sum_{i=1}^n iy_i}{n\sum_{i=1}^n y_i} - \frac{n+1}{n} =\frac{2\sum_{i=1}^n iy_i}{n} - \frac{n+1}{n} \]

However, there is one issue with the above definition.  The whole point of
considering the Gini index is to ensure allocation equality.  Generally speaking, we want
to avoid situations where some agents receive too little while some other
agents receive too much. If an agent has a huge budget, to prevent her from
receiving too much, the mechanism can simply set an \emph{investment cap}.
That is, any investment beyond the cap is not accepted. On the other hand, if an
agent has a tiny budget, to prevent her from receiving too little, we have to
reduce the price to accommodate her tiny budget, which may significantly
hurt the mechanism performance --- after all, an ICO mechanism's goal is to
raise money.

Let us consider an extreme example with $g=0.5$ and $n=100$. Let us assume that
$75$ agents have $0$ budgets and the remaining $25$ agents have very high
budgets.  In this case, there are no feasible allocations. All allocations'
Gini indices exceed $0.5$.  To show this, we note that
$y_1=y_2=\ldots,y_{75}=0$, so the Gini coefficient becomes \[
    \frac{2\sum_{i=76}^{100} iy_i}{100\sum_{i=76}^{100} y_i} - \frac{101}{100}
    \ge \frac{2\cdot 76}{100} - 1.01 = 0.51 \]

Actually, for any constant $g<1$, we can find type profiles that make it
impossible to allocate (to meet the Gini cap).  We do not wish to simply fail
the ICO in these scenarios.  Instead, we allow the mechanism to \emph{ignore
agents who receive nothing from the Gini index calculation}.  For the above
example, if we ignore all $75$ agents who receive nothing, then feasible
allocation is possible. We focus on the remaining $25$ agents, who all have
positive budgets.  We could simply allocate every agent an equal share
($a_i=\frac{1}{25}$), by setting the price low enough so that every agent can
afford $\frac{1}{25}$ coin.  Equal sharing has a Gini index of $0$.  Therefore, ignoring
agents saves us from infeasible situations. There are
two arguments for the flexibility of ignoring agents who receive nothing:

\begin{itemize} \item We are not considering all $7$ billion people when
            calculating the Gini index anyway.  People who have not joined the
            ICO are not fundamentally different from agents who receive nothing
            in the ICO.

    \item We are able to achieve much higher revenue under this assumption. We
        can construct example situations where \emph{one tiny-budget agent}
        becomes the revenue bottleneck. By not allocating anything to this
        agent and not including her in the Gini index calculation, we sometimes
        can increase the revenue by infinite many times.  \end{itemize}

Formally speaking, for our specific model, we calculate the Gini index as follows:

\begin{definition}[Flexible Gini Index]
We allow the mechanism to pick the number of winners $k$ from a set $K$.
$K$ includes all the allowed winner numbers.

The $n-k$ non-winners all receive nothing, and they are not included in the
    Gini index calculation.

Let $(a_1,a_2\ldots,a_k)$ be the allocation for the $k$ winners.
    We still have that $\sum_{i=1}^k a_i=1$ and $a_i\ge 0$.
We sort the $a_i$ in ascending order to obtain the $y_i$. So $y_1$ is the smallest value
among the $a_i$ and $y_k$ is the largest value among the $a_i$.
The Gini index for $k$ winners is defined as:
\begin{equation}\label{eq:ginik}
\frac{2\sum_{i=1}^k iy_i}{k\sum_{i=1}^k y_i} - \frac{k+1}{k}
\end{equation}
\end{definition}

Here are a few example setups for $K$:

\begin{itemize}
    \item $K=\{n\}$: No agents can be ignored. We fall back to the standard definition.
    \item $K=\{\ceil{0.5n}, \ceil{0.6n}, \ceil{0.7n}, \ceil{0.8n}, \ceil{0.9n}, n\}$:
The mechanism can ignore $50\%, 40\%, \ldots, 0\%$ of the agents.
    \item $K=\{\ceil{0.5n}, \ceil{0.5n}+1,\ldots, n\}$: The mechanism picks at least a half of the agents as winners.
\end{itemize}

$K$ is also the mechanism's parameter.  $min(K)$ is the minimum number of
winners. We should not allow the mechanism to pick too few winners. Having only
one winner leads to a very nice Gini index --- it is always $0$, but it is also
meaningless.

The set of feasible allocations is determined by the allowed winner numbers
$K$, the Gini cap $g$, the price $p$, and finally the agents' budgets.  If we
increase the price $p$, then every agent's allocation upper limit is reduced.
Therefore, the set of feasible allocations either stays the same or shrinks.
The total revenue of a mechanism is exactly the price (because we have only one
coin for sale). Therefore, to maximize revenue, a natural idea is to push up
the price to the point so that any further price increment makes feasible
allocation impossible.  We propose the Gini mechanism based on
exactly this idea.

\vspace{.05in} \begin{center} \noindent\fbox{\parbox{\textwidth}{
        \textbf{Informal description of the Gini mechanism:} The Gini mechanism
        does not ask for the agents' valuations at all.  The mechanism produces
        a price based on the agents' budgets alone.

We start with an infinitesimally small price and raise the price until any
    further price increment results in no feasible allocations.  At the final
    price, the feasible allocation is unique subject to tie-breaking.
}}
\end{center}

\section{Formal Mechanism Description} We start with a procedure that will be
used as a building block of the Gini mechanism.  The procedure answers the following
question: given the agents' budgets (the $b_i$), given the price $p$ and the
number of winners $k$, what is the allocation that minimizes the Gini index?

In the context of the above question, we do not have a Gini cap. Instead, we
search for an allocation that minimizes the Gini index. An allocation
$(a_1,a_2,\ldots,a_n)$ is feasible if it satisfies the following:

\begin{itemize}
    \item $a_i\ge 0$ and $\sum a_i=1$.
    \item $a_ip\le b_i$. $a_ip$ is agent $i$'s spending and $b_i$ is the budget limit.
        If $a_ip=b_i$, then we say this agent is \emph{maxed out}.
    \item At least $n-k$ elements of the $a_i$ are $0$s.
\end{itemize}

Without loss of generality, we assume $0\le b_1\le b_2\le\ldots\le b_n$.  Let
$(a_1^*, a_2^*, \ldots,a_n^*)$ be the feasible allocation that minimizes the
Gini index. We first notice that it is without loss of generality to assume
$0\le a_1^*\le a_2^*\le \ldots\le a_n^*$.  The reason is that if we have $i<j$
and $a_i^*>a_j^*$, then swapping $a_i^*$ and $a_j^*$ results in a feasible
allocation with the same Gini index.  This also means that we can set $a_1$ to
$a_{n-k}$ to $0$s as we have only $k$ winners.

We then consider only the winners (agent $n-k+1$ to $n$).  If agent $i$ is
not maxed out ($a_i^*p < b_i$), then agent $i+1$ must receive the
same allocation amount (\emph{i.e.}, $a_{i+1}^*=a_i^*$).  For if it is
not, we could increase $a_i^*$ by a small $\epsilon$ (still affordable by $i$)
and decrease $a_{i+1}^*$ by the same $\epsilon$ (ensuring that we still have
$a_i^*\le a_{i+1}^*$). We end up with a feasible allocation with a strictly
smaller Gini index.

Furthermore, among the $k$ winners, let $t$ be the agent that is not maxed out
with the smallest index $t$.  Agent $t+1$ must have the same allocation amount
as $t$. This implies that $t+1$ is also not maxed out, which implies that agent
$t+2$ should also have the same allocation. That is, the only allocation
structure we need to consider is

\[(\underbrace{0,0,\ldots,0}_{n-k},\frac{b_{n-k+1}}{p},\frac{b_{n-k+2}}{p},\ldots,\frac{b_{t-1}}{p},\underbrace{C,C,\ldots,C}_{n-t+1})
\]

In this allocation, we refer to $C$ as the \emph{allocation cap}. $n-t+1$ is the number of
capped agents.

By definition of $C$, we have $\frac{b_{t-1}}{p}< C\le \frac{b_{t}}{p}$, so the
above structure can be rewritten into
\begin{equation}\label{eq:minginiform}
(\underbrace{0,0,\ldots,0}_{n-k},\min\{\frac{b_{n-k+1}}{p},C\},\min\{\frac{b_{n-k+2}}{p},C\},\ldots,\min\{\frac{b_n}{p},C\})
\end{equation}

The total allocation must be $1$, hence $C$ must
satisfy the following equation:
\begin{equation}\label{eq:mingini}
\sum_{i=n-k+1}^n\min\{\frac{b_{i}}{p},C\}=1
\end{equation}

The only constraint on $C$ is that it serves as a cap, so $0< C\le \frac{b_n}{p}$.
The left side of Equation~\eqref{eq:mingini} is strictly increasing in $C$.
The only way Equation~\eqref{eq:mingini} does not have a solution is when
$C$ has already reached $\frac{b_n}{p}$ but the left side is still less than $1$.
\begin{equation}\label{eq:mingini1}
\sum_{i=n-k+1}^n\frac{b_{i}}{p}<1
\end{equation}

If this happens, then we do not have any feasible allocations. For convenience,
we define the minimum Gini index to be $1$ for this case.  When
Equation~\eqref{eq:mingini} has solutions, the solution is unique. By solving
for $C$, we can find the allocation that minimizes the Gini index, based on
Expression~\eqref{eq:minginiform}.  The optimal (Gini-index-minimizing)
allocation is unique subject to a consistent tie-breaking rule.  The allocation
essentially does not allocate to the lowest $n-k$ agents (in terms of budgets).
This is the only place where we need tie-breaking. We may simply break ties by
favoring agent $i$ over agent $j$ if $i>j$.  The optimal allocation then sets
an allocation cap $C$. All agents below $C$ max out their budgets and all
agents at least $C$ can only get $C$.



Let us then consider the relationship between $p$ and the minimum Gini index,
while fixing the agents' budgets and the winner number $k$.  We define $g^*(p)$
to be the minimum Gini index for price $p$.

\begin{proposition}\label{prop:lower}
    When $0<p\le kb_{n-k+1}$,
    $g^*(p)=0$.
\end{proposition}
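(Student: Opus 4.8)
The plan is to exhibit an explicit feasible allocation whose Gini index equals $0$ and then observe that $0$ is the smallest value the Gini index can attain, so it must be the minimum. By Expression~\eqref{eq:minginiform}, the Gini-minimizing allocation distributes among the $k$ winners (agents $n-k+1$ through $n$) subject to the budget caps $a_i p \le b_i$ and $\sum a_i = 1$. The natural candidate for a zero-inequality allocation is the \emph{equal-share} allocation in which every winner receives $a_i = \frac{1}{k}$. I would verify that this allocation is feasible precisely under the hypothesis $p \le k b_{n-k+1}$: each winner then spends $a_i p = \frac{p}{k}$, and since the budgets are sorted in ascending order the binding constraint is the poorest winner, agent $n-k+1$. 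The requirement $\frac{p}{k} \le b_{n-k+1}$ is equivalent to $p \le k b_{n-k+1}$, which is exactly the given range, and it automatically guarantees $\frac{p}{k} \le b_i$ for every winner $i \ge n-k+1$. The remaining feasibility conditions $\sum a_i = 1$ and $a_i \ge 0$ are immediate, and the non-winners receive $0$ as required.

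Next I would compute the Gini index of the equal-share allocation directly from Equation~\eqref{eq:ginik}. With all sorted values equal, $y_i = \frac{1}{k}$, so $\sum_{i=1}^k i y_i = \frac{1}{k}\cdot\frac{k(k+1)}{2} = \frac{k+1}{2}$ and $\sum_{i=1}^k y_i = 1$; the index evaluates to $\frac{k+1}{k} - \frac{k+1}{k} = 0$.

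Finally I would argue that $0$ is a lower bound for the Gini index over all feasible allocations, which makes the equal-share allocation optimal and yields $g^*(p) = 0$. This reduces to showing $2\sum_{i=1}^k i y_i \ge (k+1)\sum_{i=1}^k y_i$ for every ascending sequence $y_1 \le \cdots \le y_k$, equivalently $\sum_{i=1}^k (2i - k - 1) y_i \ge 0$. Since the coefficients $2i - k - 1$ are increasing in $i$ and sum to $0$, pairing them against the sorted $y_i$ via a rearrangement/Chebyshev-type argument gives non-negativity. The only mild obstacle is this last lower-bound step, but it is a standard structural property of the Gini index and follows directly from the monotone-coefficient observation.
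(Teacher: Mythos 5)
Your proof is correct and takes essentially the same route as the paper: the paper also exhibits the equal-share allocation by setting the cap $C=\frac{1}{k}$ in Equation~\eqref{eq:mingini}, observes feasibility under $p\le kb_{n-k+1}$, and concludes the Gini index is $0$. The only difference is that you additionally verify the (standard) fact that $0$ is a lower bound for the Gini index via the monotone-coefficient argument, which the paper treats as given.
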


\begin{proof} When $0<p\le kb_{n-k+1}$, we set $C=\frac{1}{k}$ and
Equation~\eqref{eq:mingini} is satisfied and every winner receives the same
allocation $\frac{1}{k}$, which corresponds to a Gini index of $0$.
\end{proof}

The Gini mechanism starts with an infinitesimally small price.
Proposition~\ref{prop:lower} basically says that initially feasible allocations
must exist if at least $k$ agents have positive budgets, regardless of the Gini cap.\footnote{Even if less than $k$ agents have positive budgets, we may still be able to find an infinitesimally small price that makes feasible allocation possible. If no feasible allocation exists even with infinitesimally small price, then our mechanism fails. In this case, the mechanism designer should consider raising the Gini cap.}  The Gini mechanism then increases the
price until any further increment results in no feasible allocations.  The next
proposition guarantees that when the price is large enough, no feasible
allocations exist.

\begin{proposition}\label{prop:upper}
    When $p> \sum_{i=n-k+1}^nb_i$, $g^*(p)=1$.
\end{proposition}

\begin{proof}
    This is based on Inequality~\eqref{eq:mingini1}.
\end{proof}

Proposition~\ref{prop:lower} and~\ref{prop:upper} together are not enough.  How
can we determine when we have reached an optimal price $p^*$ so that any
further increment results in no feasible allocations? Also, we need to prove
that $p^*$ exists. For example, it could be that the price can increase in
$[0,2)$ without causing infeasibility, but when the price reaches exactly $2$,
all of a sudden no feasible allocations exist. In this case, technically
speaking $p^*$ does not exist.  It cannot be $2$. Of course, in practice,
$2-\epsilon$ is fine. We will show that $p^*$ always exists. Situations like
the above do not occur.

\begin{proposition}\label{prop:cont}
    As we increase $p$ in $(0,\sum_{i=n-k+1}^nb_i]$, $g^*(p)$ is
    continuously nondecreasing in $p$.
\end{proposition}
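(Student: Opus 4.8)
The plan is to work directly with the explicit structure of the Gini-minimizing allocation from Expression~\eqref{eq:minginiform}. I would relabel the $k$ winners in ascending budget order as $\beta_1\le\cdots\le\beta_k$ (so $\beta_j=b_{n-k+j}$), and for a price $p$ let $C(p)$ be the allocation cap solving Equation~\eqref{eq:mingini}, so that the optimal allocation is $a_j(p)=\min\{\beta_j/p,\,C(p)\}$. Because $\beta_j/p$ is nondecreasing in $j$ and $C(p)$ is a common cap, the vector $(a_1(p),\ldots,a_k(p))$ is already sorted in ascending order; hence no re-sorting is needed, and using $\sum_j a_j=1$ the Gini index from~\eqref{eq:ginik} simplifies to
\[
g^*(p)=\frac{2}{k}\sum_{j=1}^k j\,a_j(p)-\frac{k+1}{k}.
\]
I would then split the argument into continuity and monotonicity.

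For continuity, note first that the left-hand side of Equation~\eqref{eq:mingini} is jointly continuous in $(C,p)$ and strictly increasing in $C$ throughout the range relevant on $(0,\sum_i\beta_i]$ (there the solution satisfies $C(p)\le\beta_k/p$, so the top agent's cap is still binding and the sum has not yet become flat in $C$). A standard monotone-implicit-function argument then gives that $C(p)$ is continuous; consequently each $a_j(p)=\min\{\beta_j/p,C(p)\}$ is continuous, and since $g^*$ is a fixed polynomial in the $a_j(p)$ with denominator $\sum_j a_j=1\neq 0$, the map $p\mapsto g^*(p)$ is continuous.

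For monotonicity, I would first observe that the ``maxed-out'' threshold $p\,C(p)$ is nondecreasing in $p$: increasing $p$ shrinks every $\beta_j/p$, which forces $C(p)$ up to keep the sum at $1$, so both factors of $p\,C(p)$ are positive and nondecreasing. Hence once an agent maxes out she stays maxed out, and $(0,\sum_i\beta_i]$ decomposes into at most $k$ subintervals on each of which the number $s$ of maxed-out winners is constant. On such a piece $a_j=\beta_j/p$ for $j\le s$ and $a_j=\bigl(1-\tfrac{1}{p}\sum_{j\le s}\beta_j\bigr)/(k-s)$ for $j>s$; substituting and differentiating yields
\[
\frac{d g^*}{dp}=-\frac{2}{k p^2}\Bigl(A-\tfrac{B S}{k-s}\Bigr),\qquad A=\sum_{j\le s} j\beta_j,\quad S=\sum_{j\le s}\beta_j,\quad B=\sum_{j>s} j.
\]
The sign thus reduces to the claim $A/S\le B/(k-s)$ (when $S>0$; both sides vanish if $S=0$). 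The left side is a $\beta$-weighted average of the indices $\{1,\ldots,s\}$, hence at most $s$, whereas the right side is the plain average of $\{s+1,\ldots,k\}$, hence at least $s+1$; therefore $dg^*/dp\ge 0$ on every piece. Together with continuity across the finitely many breakpoints, this makes $g^*$ nondecreasing on the whole interval.

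The main obstacle is the monotonicity step: reducing the derivative's sign to the clean index-average inequality and correctly handling the piecewise structure, in particular verifying that $s$ is monotone in $p$ (so the breakpoints are finite and ordered) and that the degenerate cases $s=0$ and the endpoint $p=\sum_i\beta_i$, where the capped set empties, are absorbed by continuity rather than by the derivative formula. Everything else is routine once the allocation is written in the closed form~\eqref{eq:minginiform}.
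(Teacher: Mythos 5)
Your proof is correct, but for the monotonicity half you take a genuinely different and much heavier route than the paper. The paper disposes of monotonicity with a one-line set-inclusion argument: raising $p$ lowers every allocation upper limit $b_i/p$, so the set of feasible allocations weakly shrinks, and the minimum of the Gini index over a shrinking set is nondecreasing --- no closed form, no derivative, no case analysis. You instead exploit the explicit minimizer \eqref{eq:minginiform}: you show $pC(p)$ is nondecreasing so the maxed-out set grows monotonically as a prefix of the sorted winners, partition $(0,\sum_{i=n-k+1}^n b_i]$ into at most $k$ pieces on which $s$ is constant, compute $\frac{dg^*}{dp}=\frac{2}{kp^2}\bigl(\frac{BS}{k-s}-A\bigr)$, and reduce the sign to the weighted-average inequality $\frac{A}{S}\le s<s+1\le\frac{B}{k-s}$. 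I checked the computation and it is sound, including the degenerate case $S=0$ (where the derivative vanishes, consistent with $g^*\equiv 0$ for small $p$) and the endpoint $p=\sum_{i=n-k+1}^n b_i$ (where $s=k$), which you correctly absorb via continuity. What your route buys is strictly more information: piecewise smoothness of $g^*$, an exact derivative, and \emph{strict} increase wherever some winner with positive budget is maxed out --- strictness the paper never states, though it is in the spirit of the ``price can still be pushed up'' step in the proof of Theorem~\ref{thm:maxmiumprice}. What it costs is robustness and length: your argument leans on the closed-form characterization derived before the proposition and on careful breakpoint bookkeeping, all of which the paper's shrinking-feasible-set argument sidesteps entirely. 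Your continuity half is essentially the paper's own ($C(p)$ continuous via Equation~\eqref{eq:mingini}, hence the allocation and therefore the Gini index are continuous in $p$), just spelled out more carefully: your observation that the left side of \eqref{eq:mingini} is strictly increasing in $C$ on the range containing the solution, which legitimizes the monotone-implicit-function step, is a justification the paper leaves implicit.
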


\begin{proof} As we increase $p$, every agent's allocation upper limit
    ($b_i/p$) is decreased.  So the set of allowed allocations either stays the
    same or shrinks.  Therefore, the minimum Gini index is nondecreasing with
    the price.

    Based on Equation~\eqref{eq:mingini}, the cap $C$ is continuous in $p$.
The allocation with minimum Gini index (Expression~\eqref{eq:minginiform})
changes continuously in $p$. Therefore, the Gini index changes continuously in
$p$.  \end{proof}

Combining all propositions, we have the following theorem:

\begin{theorem}\label{thm:maxmiumprice}Fixing the agents' budgets and the number of winners $k$, we
    define the maximum price $p^*$ to be the maximum price where the minimum
    Gini index is at most the Gini cap $g$. At price $p^*$, there are feasible
    allocations. Any increment in $p^*$ results in no feasible allocations.

$p^*$ exists and the corresponding feasible allocation is unique subject to tie-breaking.
\end{theorem}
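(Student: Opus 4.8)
The plan is to regard $g^*$ as a function of the price on the interval $(0,B]$, where $B=\sum_{i=n-k+1}^n b_i$, and to identify $p^*$ with the supremum of the set of feasible prices. First I would define the feasible price set $S=\{p>0 : g^*(p)\le g\}$. By Proposition~\ref{prop:lower}, $S$ is nonempty: it contains every $p\le kb_{n-k+1}$, where $g^*=0\le g$, provided at least $k$ agents have positive budgets (so that $b_{n-k+1}>0$). By Proposition~\ref{prop:upper}, together with the standing assumption $g<1$, every $p>B$ satisfies $g^*(p)=1>g$ and is therefore infeasible; hence $S$ is bounded above by $B$. I would then set $p^*=\sup S$, which exists by completeness.

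Next I would show the supremum is attained, i.e.\ $p^*\in S$, and this is where Proposition~\ref{prop:cont} does the work. Since $g^*$ is continuous on $(0,B]$, the sublevel set $\{p\in(0,B]:g^*(p)\le g\}$ is the preimage of the closed set $(-\infty,g]$ and is therefore closed; since $g^*$ is also nondecreasing, this set is a down-interval, namely $(0,p^*]$, and closedness yields $g^*(p^*)\le g$, so $p^*$ is feasible. The complementary claim, that any increment in $p^*$ is infeasible, then follows directly from the definition of the supremum: for $p>p^*$ with $p\le B$, feasibility $g^*(p)\le g$ would contradict $p^*=\sup S$, and for $p>B$ infeasibility is immediate from Proposition~\ref{prop:upper}.

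The remaining and most delicate claim is uniqueness of the feasible allocation at $p^*$. The key step is a dichotomy: either $p^*=B$ or $g^*(p^*)=g$. Indeed, if $p^*<B$ and $g^*(p^*)<g$ both held, continuity would furnish a small $\delta>0$ with $p^*+\delta\le B$ and $g^*(p^*+\delta)\le g$, contradicting maximality of $p^*$. In the case $g^*(p^*)=g$, every feasible allocation must have Gini index exactly $g$, since $g$ is simultaneously the cap and, being equal to $g^*(p^*)$, the minimum achievable Gini index; hence every feasible allocation is a Gini-index minimizer, and such minimizers were already shown to be unique subject to tie-breaking (the allocation determined by the cap $C$ solving Equation~\eqref{eq:mingini}). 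In the boundary case $p^*=B$, we have $\sum_{i=n-k+1}^n b_i/p^*=1$ (the tight version of Inequality~\eqref{eq:mingini1}), so the budget constraints $a_i\le b_i/p^*$ combined with $\sum a_i=1$ force $a_i=b_i/p^*$ for every winner — a single allocation with everyone maxed out.

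I expect the main obstacle to be the uniqueness argument rather than existence. Existence is almost mechanical once continuity and monotonicity are in hand (a closed, nondecreasing sublevel set), but uniqueness requires the recognition that feasibility at the critical price pins the Gini index to its minimum value, so that the earlier uniqueness-of-minimizer result can be invoked, while the slack-cap boundary case $p^*=B$ must be handled separately through the budget constraints alone.
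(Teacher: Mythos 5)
Your proof is correct and takes essentially the same route as the paper: both identify \(p^*\) as the maximum of the feasible-price set using Propositions~\ref{prop:lower}--\ref{prop:cont}, and both prove uniqueness by the same two cases --- \(p^*=\sum_{i=n-k+1}^n b_i\) forcing the all-maxed-out allocation, and \(g^*(p^*)=g\) forcing every feasible allocation to coincide with the (tie-breaking-unique) Gini-minimizing allocation. Your reorganization --- taking the supremum of the sublevel set \(\{p: g^*(p)\le g\}\) and deriving the dichotomy, instead of the paper's upfront case split on \(g^*\bigl(\sum_{i=n-k+1}^n b_i\bigr)\) and its level set \(\{p: g^*(p)=g\}\) --- is a cosmetic tightening that, if anything, makes attainment of the maximum more explicit than the paper's bare assertion that the level set is a point or a closed interval.
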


\begin{proof} Based on Proposition~\ref{prop:upper}, if the minimum Gini index
    at $p=\sum_{i=n-k+1}^nb_i$ is at most $g$, then this is the maximum price.
    Any higher price results in a Gini index of $1$.  At this price, the only
    feasible allocation is that the lowest $n-k$ agents in terms of budgets do
    not receive anything and the highest $k$ agents all max out.

    We then consider situations where the minimum Gini index at
    $p=\sum_{i=n-k+1}^nb_i$ is strictly higher than $g$.  When $p=kb_{n-k+1}$,
    the minimum Gini index is $0$, so $p^*\in [kb_{n-k+1},
    \sum_{i=n-k+1}^nb_i]$.  Let $S=\{p|g^*(p)=g, p \in [kb_{n-k+1},
    \sum_{i=n-k+1}^nb_i]\}$.  Given that $g^*(p)$ is continuously nondecreasing
    based on Proposition~\ref{prop:cont}, $S$ contains either one point, or $S$
    is a closed interval.  In both cases, $p^*=\max S$ exists.  At price $p^*$,
    the only feasible allocations are the Gini-index-minimizing allocations,
    which are unique subject to tie-breaking. If an allocation is not Gini
    minimizing and is feasible, then the minimum Gini index must be strictly
below $g$, which means that the price can still be pushed up due to the
continuity between the price and the minimum Gini index.  \end{proof}

The maximum price can be calculated via a simple binary search
inside the interval $[kb_{n-k+1},\sum_{i=n-k+1}^nb_i]$.
Next, we formally define the Gini mechanism.

\begin{definition}[Gini mechanism] $g$ and $K$ are mechanism parameters.
    Given the agents' budgets, for every allowed winner number $k$ in $K$, we find the maximum
    price $p^*_k$ where the minimum Gini index is at most the Gini cap $g$.
    We then choose the price to be the overall maximum $\max_{k\in K}p^*_k$.
    If there are multiple $k$ values with the same overall maximum price, then we break ties
    by favoring more winning agents. We pick the unique feasible allocation corresponding
    to the price.
\end{definition}

\section{Equilibrium under the Gini Mechanism}\label{sec:equilibrium} In this
section, we start the discussion on the agents' strategies.  The way we would
implement the Gini mechanism in practise is as follows: We announce a time
frame for the ICO. During the time frame, the agents can join/leave anytime,
and can change their investment amounts (budgets) anytime. (Technically,
joining and leaving are special cases of changing budgets.) The Gini mechanism
maintains the current price and allocation throughout the time frame.  We
assume that the time frame is long enough so that at some point, after all
interested agents have joined, an equilibrium on the budgets are to be
reached.\footnote{In practise, most ICOs are conducted over the Ethereum
network. Here, an equilibrium will always be reached due to the
\emph{transaction fees}. In this paper's model, we do not consider transaction
fees.} The equilibrium price/allocation eventually become the final
price/allocation.

Our discussion involves three closely related concepts:

\begin{itemize}

    \item We use $b_i^M$ to denote agent $i$'s true maximum budget. This is
        $i$'s private information.

    \item We use $b_i$ to denote agent $i$'s reported budget.  Agent $i$ can
        report arbitrary nonnegative budget, including reporting a value above
        $b_i^M$.

    \item We use $c_i$ to denote agent $i$'s spending.
        The spending is always at most the reported
        budget. That is, $0\le c_i\le b_i$.

\end{itemize}

We use $\vec{b}=(b_1,b_2,\ldots,b_n)$ to denote the current budget profile. We
can also call this the agents' current strategies. We still assume that $0\le b_1\le b_2\le \ldots\le b_n$.
For every winner number $k\in K$, we use
$p_k^*(\vec{b})$ to denote the maximum price for $k$ winners and budget profile
$\vec{b}$.  The Gini mechanism's price $p^*(\vec{b})$ for budget profile
$\vec{b}$ is the maximum over the $p_k^*(\vec{b})$ for $k\in K$.


Now we introduce a few propositions, which will be used for analyzing agents'
strategies.

\begin{proposition}\label{prop:bpinc}
    For any $k$, $p_k^*(\vec{b})$ is nondecreasing in every coordinate (every $b_i$).
    This also means that $p^*(\vec{b})$ is nondecreasing in every coordinate, because
    the maximum of monotone functions are still monotone.
\end{proposition}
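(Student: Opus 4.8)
The plan is to decouple the two quantities at play, the price $p$ and the budget profile $\vec b$. First I would fix the price $p$ and the winner number $k$ and establish an auxiliary monotonicity: the minimum Gini index $g^*(p)$, viewed as a function of the budgets at this fixed price, is \emph{nonincreasing} in every coordinate $b_i$. Then I would transfer this statement to the threshold price $p_k^*$ using the existence result already in hand.

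For the fixed-price step, fix $p$ and raise a single budget from $b_i$ to $b_i' \ge b_i$, leaving the other budgets unchanged. Recall that the minimum Gini index at price $p$ is taken over the allocations $(a_1,\dots,a_n)$ satisfying $a_j \ge 0$, $\sum_j a_j = 1$, at least $n-k$ of the $a_j$ equal to $0$, and $a_j p \le b_j$ for every $j$. Crucially, the ``at least $n-k$ zeros'' constraint makes no reference to the budgets, so raising $b_i$ affects only the single inequality $a_i p \le b_i$, relaxing it to $a_i p \le b_i'$. Hence every allocation feasible under $\vec b$ remains feasible under $\vec b'$: the feasible set grows weakly, and the minimum of the Gini index over a larger set can only decrease weakly. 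This also covers the infeasible case, where by convention $g^*(p)=1$, since enlarging the feasible set keeps the value at $1$ or drops it below. Because the construction is symmetric in the agents (the Gini index depends only on the sorted allocation), the same argument applies to any coordinate after re-sorting, giving $g^*_{\vec b'}(p) \le g^*_{\vec b}(p)$ for all $p$.

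To conclude, recall that $p_k^*(\vec b) = \max\{p : g^*_{\vec b}(p) \le g\}$, which exists by Theorem~\ref{thm:maxmiumprice} (itself resting on the price-monotonicity of Proposition~\ref{prop:cont}). At $p = p_k^*(\vec b)$ there is a feasible allocation meeting the cap, so $g^*_{\vec b}(p_k^*(\vec b)) \le g$. Combining this with the fixed-price inequality gives $g^*_{\vec b'}(p_k^*(\vec b)) \le g^*_{\vec b}(p_k^*(\vec b)) \le g$, so the price $p_k^*(\vec b)$ still meets the Gini cap under the larger profile $\vec b'$. Since $p_k^*(\vec b')$ is the largest price meeting the cap under $\vec b'$, we get $p_k^*(\vec b') \ge p_k^*(\vec b)$, which is exactly monotonicity in the coordinate $b_i$. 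The second sentence of the proposition is then immediate: $p^*(\vec b) = \max_{k \in K} p_k^*(\vec b)$ is a pointwise maximum of coordinatewise-nondecreasing functions, hence itself coordinatewise nondecreasing.

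I expect the only delicate point to be the bookkeeping around the evaluation point $p_k^*(\vec b)$: I must confirm it lies in the price range on which $g^*_{\vec b'}$ is well-defined. This is harmless because raising a budget can only increase $\sum_{i=n-k+1}^n b_i$, the upper end of the admissible interval in Proposition~\ref{prop:upper}, so $p_k^*(\vec b) \le \sum_{i=n-k+1}^n b_i \le \sum_{i=n-k+1}^n b_i'$ remains admissible. The rest is a clean consequence of enlarging a feasible set, and I would deliberately avoid the alternative route of tracking how the optimal allocation cap $C$ shifts with the budgets, since that requires more care than the set-containment argument.
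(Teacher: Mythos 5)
Your proof is correct and is essentially the paper's argument: the paper's own (two-sentence) proof also observes that raising $b_i$ only relaxes that agent's budget constraint, so every allocation feasible at a given price remains feasible and the price can be pushed at least as high as before. Your version merely makes this explicit by routing through the fixed-price monotonicity of the minimum Gini index $g^*$ and the characterization $p_k^*(\vec b)=\max\{p : g^*_{\vec b}(p)\le g\}$, which is a careful elaboration rather than a different route.
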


\begin{proof} When we increase agent $i$'s budget $b_i$, feasible allocations
remain feasible allocations. If previously we can push the price to a certain point,
then we still can (and may be able to push the price up even more).
\end{proof}

\begin{proposition}\label{prop:bpcont}
    For any $k$, $p_k^*(\vec{b})$ is continuous in every coordinate (every $b_i$).
    This also means that $p^*(\vec{b})$ is continuous in every coordinate, because
    the maximum of continuous functions are still continuous.
\end{proposition}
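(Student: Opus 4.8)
The plan is to fix all coordinates of $\vec{b}$ except $b_i$ and prove continuity of the one-variable function $H(b_i) := p_k^*(\vec{b})$; continuity of $p^*(\vec{b})$ then follows immediately since it is the maximum of the finitely many continuous functions $p_k^*(\vec{b})$. The argument rests on upgrading Proposition~\ref{prop:cont}: I would first show that the minimum Gini index, viewed as a function $g^*(p, b_i)$ of both the price and the varying budget (the other coordinates held fixed), is jointly continuous on the feasible region where $p$ does not exceed the winner budget sum. This follows exactly as in Proposition~\ref{prop:cont}: the left-hand side of Equation~\eqref{eq:mingini} is jointly continuous in $(p, b_i, C)$ and strictly increasing in $C$, so its root $C$ is jointly continuous; substituting into Expression~\eqref{eq:minginiform} makes the allocation, and hence its Gini index, jointly continuous. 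The reshuffling of the sorted order as $b_i$ passes another agent's budget causes no discontinuity, since at such a tie the two agents are interchangeable.

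Next I would combine joint continuity with the monotonicity of Proposition~\ref{prop:bpinc}. Since $H$ is nondecreasing, the one-sided limits $p_- = \lim_{b_i \uparrow \beta} H$ and $p_+ = \lim_{b_i \downarrow \beta} H$ exist and satisfy $p_- \le H(\beta) \le p_+$. The upper direction is the clean one: take $b_i^{(m)} \downarrow \beta$ and set $p_m := H(b_i^{(m)})$, so that $g^*(p_m, b_i^{(m)}) \le g$ holds for every $m$. Each $p_m$ lies below the corresponding winner budget sum, so $p_+$ lies in the feasible region at $b_i = \beta$, and joint continuity yields $g^*(p_+, \beta) \le g$. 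By the maximality in Theorem~\ref{thm:maxmiumprice} this forces $p_+ \le H(\beta)$, hence $p_+ = H(\beta)$: the function $H$ is right-continuous. The point is that $\{g^* \le g\}$ is a closed condition, which is exactly why approaching from above works smoothly.

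The main obstacle is left-continuity, i.e.\ ruling out a downward jump, which amounts to showing that just below the critical price $H(\beta)$ the Gini cap is slack rather than flat at level $g$. For the contradiction I would pick a witness price $q \in (p_-, H(\beta))$ with $g^*(q, \beta) < g$ strictly; joint continuity then gives $g^*(q, b_i) \le g$ for $b_i$ slightly below $\beta$, forcing $H(b_i) \ge q$ and hence $p_- \ge q$, contradicting $q > p_-$. The only way to fail to find such a $q$ is if $g^*(\cdot, \beta)$ is flat at level exactly $g$ on an interval ending at $H(\beta)$. I would exclude this by establishing strict monotonicity of $g^*(\cdot, \beta)$ on $(k b_{n-k+1}, \text{winner sum}]$: above the threshold $k b_{n-k+1}$ of Proposition~\ref{prop:lower}, the smallest-budget winner is maxed out while the largest is capped at $C$, and as $p$ grows the maxed-out allocations $b_j/p$ strictly shrink while $C$ must strictly grow to preserve Equation~\eqref{eq:mingini}; the spread, and thus the Gini index, therefore strictly increases, leaving no flat segment at a positive level.

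The single degenerate case is when all $k$ winner budgets coincide, where $g^* \equiv 0$ and $H(\beta)$ equals the winner budget sum itself, a continuous function of $b_i$; this is precisely the budget-sum-limited branch of Theorem~\ref{thm:maxmiumprice} and is handled directly. I expect the strict-monotonicity step to be the delicate part, since it is where the specific cap structure of the Gini-minimizing allocation in Expression~\eqref{eq:minginiform} must be invoked to certify that no price interval is wasted at exactly the cap value $g$.
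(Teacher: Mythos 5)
Your architecture (monotone one-variable function $H$, one-sided limits, closedness of $\{g^*\le g\}$ for right-continuity, a strict witness price $q$ for left-continuity) is sound up to the final step, but the lemma that the left-continuity step leans on --- strict monotonicity of $g^*(\cdot,\beta)$ on $(kb_{n-k+1}, \text{winner sum}]$ --- is false in precisely the cases the proposition must cover, namely profiles containing zero budgets. These are squarely in scope: the paper itself evaluates $p^*(0,b_{-i})$ in Proposition~\ref{prop:pullout}, and the model section's motivating example has $75$ agents with zero budgets. Concretely, take $k=n=4$ and budgets $(0,b,1,1)$ with $0<b\le 1$. Then $b_{n-k+1}=0$, so your claimed interval of strict increase is $(0,\,2+b]$; yet for every $p\in(0,3b]$ the Gini-minimizing allocation from Expression~\eqref{eq:minginiform} is $(0,\frac13,\frac13,\frac13)$, giving $g^*(p)\equiv\frac14$ --- a flat segment at a \emph{positive} level, because the zero-budget winner's allocation is pinned at $0$ and does not shrink as $p$ grows, so no mass moves in the Lorenz curve. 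If the cap is $g=\frac14$, then $H(b)=3b$ and $g^*(q,\beta)=g$ for \emph{every} $q<H(\beta)$: the strict witness $q$ you need does not exist, and your contradiction cannot start. Continuity does in fact hold here ($H(b)=3b$ is continuous), but for a reason outside your case analysis: the plateau's right endpoint, which is (number of positive-budget winners) times the smallest positive winner budget, itself moves continuously with the coordinate. Your ``single degenerate case'' (all $k$ winner budgets equal) does not capture this, so the proof as planned has a genuine hole at exactly the step you flagged as delicate; patching it requires a separate treatment of all profiles where $g^*$ plateaus at level exactly $g$, not just the all-equal one. (For the record, your strict-monotonicity claim is correct when $b_{n-k+1}>0$, via the prefix-sum argument, and your right-continuity half and the joint-continuity of the root $C$ of Equation~\eqref{eq:mingini} are fine.)

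It is also worth knowing that the paper's own proof takes a completely different and much shorter route that sidesteps level-set analysis of $g^*$ entirely: $p_k^*$ is homogeneous of degree one, $p_k^*(\alpha\vec b)=\alpha\, p_k^*(\vec b)$, because scaling all budgets and the price by $\alpha$ preserves both the feasible allocation set and the Gini index. Combined with the coordinatewise monotonicity of Proposition~\ref{prop:bpinc}, increasing a single $b_i$ by at most $b_i(\frac{p+\epsilon}{p}-1)$ is dominated by scaling \emph{all} coordinates by $\frac{p+\epsilon}{p}$, so the price rises by at most $\epsilon$, and symmetrically for decrements --- continuity in a few lines. (That argument, too, degenerates at $b_i=0$, where the allowed increment vanishes, so the zero-budget boundary needs care under either approach; but the homogeneity route avoids your problematic strict-monotonicity lemma altogether and would be the natural way to rescue your plan.)
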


\begin{proof} We focus on a specific $k$.  If we change $b_i$ to $\alpha b_i$
    for \emph{every} $i$, then the maximum price also changes by $\alpha$
    times (to offset the change in budgets).  Let $p$ be the maximum price. For an arbitrary $\epsilon>0$, to
    change the price from $p$ to $p+\epsilon$, we can increase every $b_i$ to
    $b_i\frac{p+\epsilon}{p}$.  This also means that for a specific $i$, if the
    increment in $b_i$ is at most $b_i\frac{p+\epsilon}{p}-b_i$, and we do not
    change the other coordinates (this means less increment in price because
the price is monotone in every coordinate), then the price increment is at most
$\epsilon$. The same argument works for decrement. \end{proof}

\begin{proposition}\label{prop:bminbmax}
    Agent $i$ faces a minimum investment amount $b_{min}(\vec{b}_{-i})$
    and a maximum investment cap $b_{max}(\vec{b}_{-i})$. Both are determined
    by the other agents' budgets (\emph{i.e.}, $\vec{b}_{-i}$).
    Agent $i$'s spending is $0$ when her budget is below $b_{min}(\vec{b}_{-i})$
    and her spending stays at $b_{max}(\vec{b}_{-i})$ when her budget grows beyond the cap.
    In between the minimum and the maximum investment amounts,
    the price $p^*$ strictly increases in $i$'s reported budget.
\end{proposition}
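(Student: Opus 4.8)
The plan is to study two functions of agent $i$'s reported budget $b_i$, with $\vec{b}_{-i}$ held fixed: the mechanism price $P(b_i):=p^*(\vec{b})$ and agent $i$'s spending $c_i(b_i)$. By Proposition~\ref{prop:bpinc} and Proposition~\ref{prop:bpcont}, $P$ is nondecreasing and continuous in $b_i$, so the task reduces to locating the sub-interval on which $P$ is \emph{strictly} increasing and matching it with the behavior of $c_i$. First I would record the shape of the spending from the optimal allocation structure in Expression~\eqref{eq:minginiform}: for the winning winner-number $k$ with allocation cap $C$, agent $i$ either receives nothing (a non-winner, $c_i=0$), receives $b_i/P$ and is maxed out (her budget is binding, $b_i/P<C$, so $c_i=b_i$), or receives exactly $C$ (her budget is slack, $b_i/P\ge C$, so $c_i=C\,P$). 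Hence the price can move only while agent $i$'s budget constraint is binding, that is, only in the maxed-out regime.

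Next I would pin down the two flat regions and thereby define the thresholds. For small $b_i$, including a tiny-budget agent among the winners forces a low price, since keeping the minimum Gini index at most $g$ prevents her allocation from being dwarfed by the others; thus the optimal $k$ excludes her, $P$ is independent of $b_i$, and $c_i=0$. I define $b_{min}(\vec{b}_{-i})$ as the threshold at which agent $i$ first enters the winning set. For large $b_i$, since $g<1$ the Gini cap forbids any single winner from taking an allocation near $1$, so for \emph{every} $k\in K$ agent $i$ is eventually capped; each $p_k^*$ (and hence $P$) becomes independent of $b_i$, and her spending equals the constant $C\,P$, which I define to be $b_{max}(\vec{b}_{-i})$. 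This value does double duty: agent $i$ is capped precisely when $b_i>C\,P=b_{max}$, so for $b_i>b_{max}$ her spending stays at $b_{max}$, as claimed.

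The heart of the argument is strict monotonicity on the middle interval, where agent $i$ is a winner maxed out below the cap for the optimal $k^*$. Increasing $b_i$ by a small $\delta$ relaxes her binding budget constraint, and I would argue that at the unchanged price the minimum Gini index drops strictly below $g$: starting from the old Gini-minimizing allocation, raise agent $i$'s share by $\epsilon$ (now affordable) and lower the share of some capped agent by $\epsilon$; since $a_i<C$, this is a Pigou--Dalton transfer, which strictly decreases the Gini index. By the continuity of $g^*$ in the price (Proposition~\ref{prop:cont}), the price can then be pushed strictly higher, so $p_{k^*}^*$, and therefore $P=\max_k p_k^*$, strictly increases. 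The boundary case where all winners max out (price $=\sum_{\text{winners}}b_j$, no binding cap) is handled directly: raising $b_i$ raises this sum while agent $i$ stays maxed out, so $P$ again strictly increases.

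The main obstacle is assembling these local facts into the clean three-interval picture, i.e. showing the binding, strictly-increasing region is a single interval $[b_{min},b_{max}]$ rather than being interrupted by flat stretches. The danger comes from the maximization over $k\in K$: agent $i$ could be capped for the currently optimal $k^*$ (a flat stretch) while remaining below the cap for some smaller $\tilde{k}$, whose price $p_{\tilde{k}}^*$ keeps rising and later overtakes $p_{k^*}^*$, making $P$ flat and then increasing again. I would close this with a monotonicity argument across winner-numbers, showing that winner-inclusion is monotone in $b_i$ and that once agent $i$'s budget is large enough to be capped in the optimal $k^*$ no smaller $\tilde{k}$ can overtake. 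This forces agent $i$ to pass through the three regimes (excluded, maxed-out, capped) in order, yielding exactly the two thresholds $b_{min}\le b_{max}$.
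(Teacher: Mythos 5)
Your proposal is correct and follows essentially the same route as the paper's own proof. The paper likewise raises $b_i$ from $0$, defines $b_{min}(\vec{b}_{-i})$ as the budget at which $i$ first becomes a winner (explicitly ignoring the same tie-breaking technicality you would face), and then argues the identical dichotomy at the current price: if $i$ is uncapped in at least one feasible allocation in which she wins, increasing $b_i$ lowers that allocation's cap at the current price, strictly decreasing its Gini index, so $p^*$ strictly increases; if $i$ is capped in every such allocation, increments have no effect, and she has reached $b_{max}(\vec{b}_{-i})$. Your Pigou--Dalton transfer is just a reformulation of the paper's ``the cap of this allocation decreases'' step, and your explicit handling of the all-maxed-out boundary case (price equal to the sum of the winners' budgets) is a minor addition the paper leaves implicit. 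The one substantive divergence is your final paragraph: you correctly identify that the maximization over $k\in K$ could in principle interleave flat and strictly increasing stretches (a plateau at the capped optimal $k^*$ later overtaken by a rising $p^*_{\tilde k}$), but you only promise, and do not supply, the cross-$k$ lemma that would rule this out. You should be aware that the paper is no more rigorous here: its proof dispatches the same issue with the bare assertion that once $i$ is capped in all feasible allocations, ``any increment in $i$'s budget has no effect anywhere,'' which silently covers the non-argmax winner numbers. So your sketch flags, rather than falls behind, the weakest step of the published argument; to genuinely close it one would need to show, e.g.\ via the monotonicity of $g^*(p)$ in $p$ from Proposition~\ref{prop:cont}, that each argmax price freezes once $i$ is strictly capped, and that a non-argmax $p^*_{\tilde k}$ can reach the frozen overall price only with $i$ already capped for $\tilde k$ --- a statement neither you nor the paper actually proves.
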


\begin{proof} We raise $i$'s budget from $0$. At some point when her budget
    reaches $b$, for the first time she becomes a winner under the Gini
    mechanism.  Due to tie-breaking, it could be that $i$'s budget must be
    strictly above $b$ for her to become a winner. We ignore this technicality.
    Once $i$ becomes a winner. That means at the current price, $i$ is a winner
    in at least one feasible allocation. Let the set of feasible allocations
    where $i$ is a winner be $S_i$. If $i$ is not capped in at least one
    allocation in $S_i$, then by increasing $i$'s budget, the cap of this
    allocation decreases at the current price, which leads to strictly smaller
    Gini index. This then means the overall price $p^*$ should strictly increase as a result.
    If $i$ is capped in all feasible allocations in $S_i$, then any increment
in $i$'s budget has no effect anywhere, this means that $i$ has already reached
her maximum investment cap.  \end{proof}

\begin{proposition}\label{prop:bpdivinc}
    For any $k$ and $i$, $b_i/p_k^*(\vec{b})$ is nondecreasing in $b_i$.
    This also means that $b_i/p^*(\vec{b})$ is nondecreasing in $b_i$, because
    the minimum of monotone functions are still monotone.
\end{proposition}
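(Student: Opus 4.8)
The plan is to reduce the claim to the positive homogeneity of the maximum price together with the coordinatewise monotonicity established in Proposition~\ref{prop:bpinc}. The key observation is that $b_i/p_k^*(\vec{b})$ is exactly agent $i$'s allocation upper limit at the realized price (the constraint $a_ip\le b_i$ reads $a_i\le b_i/p$), so the proposition asserts that raising one's own budget never shrinks one's affordable allocation, even though doing so also drives the price up. The intuition is that increasing a single budget pushes the price up by \emph{less than} proportionally, whereas the numerator $b_i$ grows exactly proportionally, so the ratio cannot decrease.

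First I would record the homogeneity fact already used inside the proof of Proposition~\ref{prop:bpcont}: scaling every budget by a common factor $\alpha>0$ scales the maximum price $p_k^*$ by exactly $\alpha$. This holds because the minimum Gini index depends on the profile only through the ratios $b_i/p$; the substitution $(\vec{b},p)\mapsto(\alpha\vec{b},\alpha p)$ leaves every such ratio fixed, hence preserves the allocation cap $C$, the whole allocation, and therefore the minimum Gini index, so the threshold price where the minimum Gini index equals $g$ scales by $\alpha$.

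Next, fix the other budgets $\vec{b}_{-i}$ and take any $0<b_i<b_i'$, writing $\beta=b_i'/b_i>1$. The comparison I would set up is between the profile $\vec{b}'$, obtained from $\vec{b}$ by replacing its $i$-th entry with $b_i'$, and the fully scaled profile $\beta\vec{b}$. These two profiles agree in coordinate $i$ (both equal $b_i'$), while $\beta\vec{b}$ dominates $\vec{b}'$ in every other coordinate. Proposition~\ref{prop:bpinc} then gives $p_k^*(\vec{b}')\le p_k^*(\beta\vec{b})$, and homogeneity rewrites the right-hand side as $\beta\,p_k^*(\vec{b})$. Rearranging the resulting inequality $p_k^*(\vec{b}')\le (b_i'/b_i)\,p_k^*(\vec{b})$ (divide by $b_i'$ and reciprocate, using positivity of the prices in the feasible regime) yields $b_i'/p_k^*(\vec{b}')\ge b_i/p_k^*(\vec{b})$, which is the desired monotonicity. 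The boundary case $b_i=0$ is immediate, since the ratio is then $0$ and cannot exceed any later value.

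Finally, for the global price I would use $p^*(\vec{b})=\max_{k\in K}p_k^*(\vec{b})$, so that for $b_i\ge 0$ we have $b_i/p^*(\vec{b})=\min_{k\in K} b_i/p_k^*(\vec{b})$; a pointwise minimum of nondecreasing functions is nondecreasing, giving the second claim. The only real obstacle is the homogeneity step, and even that is less an obstacle than a matter of reusing the scaling argument already present in Proposition~\ref{prop:bpcont}. Notably, no continuity is required here: monotonicity and homogeneity deliver the inequality directly for any pair of budget values $b_i<b_i'$.
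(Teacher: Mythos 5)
Your proof is correct, and it takes a genuinely different route from the paper's. The paper argues by case analysis on agent $i$'s status in the Gini-index-minimizing allocation: if $i$ is a non-winner or is capped, raising $b_i$ leaves $p_k^*$ unchanged and the ratio trivially grows; if $i$ is an uncapped winner, the paper supposes $b_i/p_k^*$ decreases and derives a contradiction --- every uncapped winner's allocation $b_j/p$ would shrink while any cap would have to rise to keep the total at $1$, forcing a strictly larger Gini index and hence infeasibility --- and in the cap-free subcase it computes the ratio explicitly as $b_i/\sum_{j=n-k+1}^n b_j$. You instead prove the single global inequality $p_k^*(\vec{b}')\le \frac{b_i'}{b_i}\,p_k^*(\vec{b})$ by sandwiching the unilaterally raised profile $\vec{b}'$ below the fully scaled profile $\beta\vec{b}$ and invoking Proposition~\ref{prop:bpinc} together with degree-one homogeneity of $p_k^*$ --- a fact the paper itself asserts inside the proof of Proposition~\ref{prop:bpcont}, and which you justify cleanly: feasibility and the Gini index depend on $(\vec{b},p)$ only through the ratios $b_i/p$, so $g^*$ is invariant under $(\vec{b},p)\mapsto(\alpha\vec{b},\alpha p)$ and the threshold price scales linearly. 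Your route buys a perturbation-free argument that handles arbitrary finite increases $b_i<b_i'$ in one step, needs no continuity, and sidesteps the paper's somewhat informal contradiction step about the Gini index strictly increasing; it even yields the sharper quantitative statement that $p_k^*$ is sub-homogeneous in each single coordinate. What the paper's case analysis buys in exchange is structural information --- identifying exactly when the price is flat versus strictly increasing in $b_i$ --- which it reuses elsewhere (\emph{e.g.}, in Proposition~\ref{prop:bminbmax}). Both arguments are sound; yours is arguably the tighter of the two, with the only care point being the positivity/existence of $p_k^*$ in the feasible regime, which you flag and which the paper glosses over equally.
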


\begin{proof} If $b_i$ is small and $i$ is not a winner, then increasing $b_i$
    has no effect on the price.  If $b_i$ is large and $i$ is capped, then
    again, the increment in $b_i$ has no effect on the price. When price stays
    the same, $b_i/p_k^*(\vec{b})$ is nondecreasing in $b_i$.  When $b_i$ is a
    winner and not capped, when increasing $b_i$, the price gets strictly
    higher. If $i$'s allocation $b_i/p_k^*(\vec{b})$ decreases, then all
    uncapped agents' allocations decrease. If any agent is capped, then the cap
    increases.  This results in strictly higher Gini index, which makes the
allocation infeasible. If no agents is capped, then
$b_i/p_k^*(\vec{b})=b_i/\sum_{j=n-k+1}^nb_j$, which is also nondecreasing in $b_i$.
\end{proof}

\begin{proposition}\label{prop:bpderivative}
    Assuming $0<b_1<b_2<\ldots<b_n$,
   for any $i$,
    \[ \frac{\partial p^*_k(\vec{b})}{\partial b_i}\le\frac{g+3}{1-g}
    \]
    This also means that the partial derivative of $p^*(\vec{b})$ against $b_i$ is
    bounded above by $\frac{g+3}{1-g}$.
\end{proposition}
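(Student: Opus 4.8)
The plan is to treat $p^*_k(\vec b)$ as an implicitly defined function and differentiate the relations that pin it down, after two reductions. First, if the maximum price sits in the boundary regime of Proposition~\ref{prop:upper} (all $k$ winners maxed out, so $p^*_k=\sum_{i=n-k+1}^n b_i$), then $\partial p^*_k/\partial b_i\in\{0,1\}$, which is trivially below $\tfrac{g+3}{1-g}\ge 3$ since $g>0$; so I would assume the interior regime, where by Theorem~\ref{thm:maxmiumprice} the minimum Gini index equals exactly $g$ at $p^*_k$. Second, by the reasoning behind Proposition~\ref{prop:bminbmax} and Proposition~\ref{prop:bpdivinc}, $\partial p^*_k/\partial b_i=0$ whenever agent $i$ is a non-winner or is capped, so it suffices to bound the derivative when $i$ is a \emph{maxed-out} winner. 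Relabel the winners $1,\dots,k$ by increasing budget and let the $s$ maxed-out winners have allocations $x_1\le\dots\le x_s$ (so $x_j=b_{n-k+j}/p$); say agent $i$ sits at rank $r$ among them, $1\le r\le s$.

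In this regime the optimal allocation is the maxed/capped structure of Expression~\eqref{eq:minginiform}, governed by two relations: the normalization~\eqref{eq:mingini}, $\sum_{j=1}^s x_j+(k-s)C=1$, and the Gini equation $G=g$, i.e. $\sum_{j=1}^s j\,x_j + C\cdot\frac{(k-s)(k+s+1)}{2}=\frac{gk+k+1}{2}$. I would apply the implicit function theorem to this pair, viewing $p$ and $C$ as functions of $b_i$. The differentiation is routine; the payoff I expect is a clean cancellation in which the cap $C$ and all budget magnitudes drop out, leaving $\partial_p G=\frac{s-gk}{kp}$ and $\partial_{b_i}G=\frac{2r-(k+s+1)}{kp}$, so that $\frac{\partial p^*_k}{\partial b_i}=-\frac{\partial_{b_i}G}{\partial_p G}=\frac{k+s+1-2r}{\,s-gk\,}$. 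The denominator is positive because $\partial_p G>0$ is exactly the statement (Proposition~\ref{prop:cont}) that raising the price strictly raises the minimum Gini, which forces $s>gk$.

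The remaining step, which I expect to be the genuine obstacle, is to bound $\frac{k+s+1-2r}{s-gk}$ by $\frac{g+3}{1-g}$. Since the numerator decreases in $r$, the worst case is $r=1$, reducing the target to $\frac{k+s-1}{s-gk}\le\frac{g+3}{1-g}$; cross-multiplying (legitimate as $s>gk$ and $g<1$) turns this into a linear inequality in $s$ that holds only when $s$ is bounded away from $gk$. Thus the crux is to show that at the \emph{maximum} price the number $s$ of maxed-out winners is forced to be large enough relative to $gk$. I would try to extract such a lower bound from the interplay of the two defining relations, which together pin the weighted mass $\sum_{j=1}^s\!\big(\frac{k+s+1}{2}-j\big)x_j=\frac{s-gk}{2}$ while requiring every maxed-out winner to receive a positive share ($x_1>0$) under $G=g$. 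Converting this identity into the needed control on $s$ is the delicate part, and I expect the hypothesis $0<b_1<\dots<b_n$ to be used here so that the ranks and the maxed/capped split are locally well defined and the derivative above is valid.
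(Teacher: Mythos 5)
Your implicit-function derivation is correct, and it is in fact sharper than the paper's own argument. I checked the two pinned relations and the identity $\sum_{j=1}^{s}\bigl(\tfrac{k+s+1}{2}-j\bigr)x_j=\tfrac{s-gk}{2}$ that they jointly force; your partials and the resulting closed form
\[
\frac{\partial p_k^*}{\partial b_i}=\frac{k+s+1-2r}{\,s-gk\,}
\]
are exactly right in the generic interior regime (strict budgets keep $s$ and the ranks locally constant, so the implicit function theorem applies). For comparison, the paper perturbs one budget and rescales the spendings of the agents above the index $z=\lfloor (n-k)+\frac{k}{2}(g+\frac{k+1}{k})\rfloor$ so that the leading ratio in \eqref{eq:ginik} stays at the cap; note $z-(n-k)\approx\frac{(1+g)k+1}{2}$, so the paper's case ``$b_{z+1}$ is not capped'' is precisely your regime $s\ge\frac{(1+g)k+1}{2}$. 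There your formula vindicates the paper's bound: $\frac{k+s-1}{s-gk}$ is decreasing in $s$, and at $s=\frac{(1+g)k+1}{2}$ it equals $\frac{(3+g)k-1}{(1-g)k+1}<\frac{g+3}{1-g}$. (Your computation is also the rigorous version of that argument: exhibiting a feasible allocation after the perturbation only lower-bounds the new maximum price, so the paper's construction does not by itself upper-bound the derivative.)

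However, the step you flag as the crux is not merely delicate --- it is false, and with it the proposition as stated. Nothing forces $s\gtrsim\frac{(1+g)k}{2}$ at the maximum price; only $s>gk$ is forced, and your formula blows up as $s\downarrow gk$. Concretely, take $g=\frac{1}{2}$, $n=k=10$, budgets $b_1,\dots,b_6$ perturbed around $1$ and $b_7,\dots,b_{10}$ huge. The Gini-minimizing allocation at price $p$ maxes out agents $1$--$6$ and caps agents $7$--$10$, with minimum Gini $0.6-\frac{17B-2W}{10p}$, where $B=\sum_{j\le6}b_j$ and $W=\sum_{j\le6}jb_j$; hence $p_{10}^*=17B-2W\approx 60$ and $\partial p^*_{10}/\partial b_1=15$, matching your formula with $s=6$, $r=1$ (namely $\frac{10+6+1-2}{6-5}=15$) and exceeding $\frac{g+3}{1-g}=7$. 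So your route, completed, refutes the general claim, and it localizes the defect in the paper's proof: the case ``when $z+1$ is capped.'' There, after decreasing $b_{n-k+1}$, one gets $\alpha=1-\frac{(r-1)\epsilon}{y'-ry}<1$ (not $\alpha>1$ as asserted, since every rank in the top group exceeds $r$, so $y'-ry>0$), and the rescaled spendings of agents $z+1,\dots,n$ drop below those of the capped agents in $t,\dots,z$; the ascending order underlying the ranks in \eqref{eq:ginik} is then violated, and by the rearrangement inequality the true sorted Gini of the constructed allocation exceeds $g$, so the construction is infeasible (in my example it reports a derivative of about $5.67$ where the truth is $15$). The proposition survives only under the extra hypothesis that agent $z+1$ is uncapped (your $s\ge\frac{(1+g)k+1}{2}$), or with the bound replaced by your exact expression; the downstream uses (Proposition~\ref{prop:maxout} and the asymptotic revenue theorem) should be rechecked against this.
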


We focus on budget profiles without ties for convenience. In an actual
equilibrium, if the budget profile contains ties, then we can simply perturb
the budgets infinitesimally to remove ties. We have shown that the price
function is continuous, so perturbing the budgets will only change the agents'
utilities infinitesimally.

\begin{proof} We focus on a specific winner number $k$ and a specific budget
    profile $(b_1,b_2,\ldots,b_n)$ with $0<b_1<b_2<\ldots<b_n$.

    Let $z=\floor{(n-k)+\frac{k}{2}(g+\frac{k+1}{k})}$.  Let us analyze the
    derivative with respect to $b_i$ when $i>z$.  If $b_i$ is already capped
    under this budget profile, then the derivative is $0$.  So we only need to
    consider the situation where $b_i$ is not capped. We reduce $b_i$ to
    $b_i-\epsilon$ and change the price from $p^*$ to $p^*-\epsilon$.  We
    consider a new allocation where every other agent's spending stays the
    same, but $i$'s spending is reduced by $\epsilon$. This is still a feasible
    allocation but may not meet the Gini cap.  We consider the Gini index of
    the new allocation. Let us consider the definition of the Gini index
    in~\eqref{eq:ginik}. For the new allocation, the only change is a reduction
    in proportion of $y_i$ in both the numerator and the denominator. The
    numerator is multiplied by $2(i-n+k)$ (the indices are from $1$ to $k$ for
    the $k$ winners in~\eqref{eq:ginik}). The denominator is multiplied by $k$.
    Since $i>z$, we have $i\ge z+1$ and \[\frac{2(i-n+k)}{k}\ge
    \frac{2(z+1-n+k)}{k}\ge g+\frac{k+1}{k} \] We want the first term
    of~\eqref{eq:ginik} to be at most $g+\frac{k+1}{k}$ to meet the Gini cap.
    So lowering $y_i$'s proportion only helps this goal. This means that the
    new allocation also meets the Gini cap. So the overall price should be at
    least $p-\epsilon$. In conclusion, the derivative is at most $1$ when
    $i>z$.

    We then consider $i\le z$. If $i\le n-k$, then $i$ is not a winner and the
    derivative is $0$. We only present the proof for $i=n-k+1$. This is the
    winner with the lowest budget. This budget has the highest impact on the
    price. $b_{n-k+1}$ is not capped, for otherwise the Gini index equals $0$.
    Therefore, we can adjust $b_{n-k+1}$ both up and down. $n-k+1$'s
    spending is the same as her budget.

    We first consider the case where $b_{z+1}$ is not capped.
    We define a few terms (the $c_i$ are the actual spendings of the agents):
    \begin{itemize}
        \item $x'=\sum_{i=n-k+1}^z(i-n+k)c_i$
        \item $x=\sum_{i=n-k+1}^zc_i$
        \item $y'=\sum_{i=z+1}^n(i-n+k)c_i$
        \item $y=\sum_{i=z+1}^nc_i$
    \end{itemize}
    The first term of~\eqref{eq:ginik} is then $r=\frac{x'+y'}{x+y}$.
    When the Gini cap is met, $r=\frac{k}{2}(g+\frac{k+1}{k})$.
    (If the Gini cap is not met, then all $k$ agents spend all their
    budgets, in which case any change in budget corresponds to a derivative of at most $1$.)
    We use $s$ to denote $x+y$.

    We change $b_{n-k+1}$ to $b_{n-k+1}+\epsilon$.  We consider a new
    allocation with the following spendings. Agent $n-k+1$'s spending is
    increased from $b_{n-k+1}$ to $b_{n-k+1}+\epsilon$. Other agents from $n-k+2$
    to $z$ keep their current spendings. The agents from $z+1$ to $n$ reduce
    their spendings by a factor of $\alpha$.

    The first term of the Gini index of this new allocation is then
    $\frac{x'+\epsilon+\alpha y'}{x+\epsilon+\alpha y}$. We set $\alpha$ so
    that this term equals $r$.
    \[\alpha=\frac{(x-x')\epsilon+(y-y')\epsilon+x'y-xy'}{x'y-xy'}<1\] This
    implies that the new allocation is still feasible given the above $\alpha$.
    Since we assume agent $z+1$ is not capped, there is room for pushing down
    the spendings of agent $z+1$ to $n$ by multiplying the original spendings by $\alpha$.
    The price corresponding to the new allocation is the denominator of the
    first term of the Gini index, which is $x+\epsilon+\alpha y$ (under the Gini mechanism, the price is always equal to the total spending).
    The derivative
    of $x+\epsilon+\alpha y$ against $\epsilon$ equals
    \[\frac{(x+y)(y'-y)}{xy'-x'y}\]

    We simply the derivative:
    \[\frac{(x+y)(y'-y)}{xy'-x'y}=\frac{xy'+yy'-yy-xy+(x'y-x'y)}{xy'-x'y}\]
    \[=1+y\frac{y'-y-x+x'}{xy'-x'y}=1+y\frac{rs-s}{xy'-x'y}\]
    \[=1+y\frac{rs-s}{xy'-x'y+(x'x-x'x)}=1+y\frac{rs-s}{xrs-x's}\]
    \[=1+y\frac{r-1}{xr-x'}=1+y\frac{r-1}{xr-x'+ry-ry}=1+y\frac{r-1}{y'-ry}\]
    To maximize the derivative, we minimize $\frac{y'-ry}{y}=\frac{y'}{y}-r$ instead.
    $\frac{y'}{y}$ is minimized when all $b_i$ are the same for $i\ge z+1$.
    The ratio is minimized to
    \[\frac{(z+1-n+k+n)(n-z)}{2(n-z)}=\frac{z+1-n+k+n}{2}\ge\frac{r+n}{2}\]

    So $\frac{y'-ry}{y}$ is minimized to $\frac{n-r}{2}$.
    The derivative is maximized to
    \[1+\frac{2(r-1)}{n-r}=
    1+\frac{2k(g+\frac{k+1}{k})-4}{2n-k(g+\frac{k+1}{k})}
    =1+\frac{2kg+2(k+1)-4}{2n-kg-(k+1)}
    \]
    The above increases with $k$, but $k$ is at most $n$, so the above is maximized to
    \[1+\frac{2ng+2(n+1)-4}{2n-ng-(n+1)}=
    1+\frac{(2g+2)n-2}{(1-g)n-1}=
    \frac{(g+3)n-3}{(1-g)n-1}
    \]
    The above expression increases with $n$. When $n$ goes to infinity, we
    have the final upper bound $\frac{g+3}{1-g}$.

    When $z+1$ is capped, we consider changing $b_{n-k+1}$ to $b_{n-k+1}-\epsilon$
    instead. All analysis is almost identical. After the change, $\alpha>1$, but
    if $z+1$ and future agents are capped, there is room for increasing the spendings.
    \end{proof}

Given the above propositions, we demonstrate that for most agents, the
strategies are fairly simple.  We start with a sufficient condition for an
agent to report $0$ budget.

\begin{proposition}\label{prop:pullout}
    If $p^*(0, b_{-i})\ge v_i$, then an agent's best strategy is to report a budget of $0$.
\end{proposition}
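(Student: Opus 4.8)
The plan is to combine the monotonicity of the price in agent $i$'s own budget (Proposition~\ref{prop:bpinc}) with the sign of agent $i$'s utility as a function of the realized price. First I would recall from the model description that if agent $i$'s spending is $c_i$ at price $p$, her utility equals $\frac{c_i v_i}{p} - c_i = c_i \cdot \frac{v_i - p}{p}$. Since $c_i \ge 0$ and $p > 0$, the sign of this utility is governed entirely by $v_i - p$: the utility is nonpositive whenever $p \ge v_i$, and it is exactly $0$ when $c_i = 0$.

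Next I would argue that no matter what budget agent $i$ reports, the realized price is at least $v_i$. By Proposition~\ref{prop:bpinc}, $p^*(\vec{b})$ is nondecreasing in every coordinate, so for any reported budget $b_i \ge 0$ we have $p^*(b_i, b_{-i}) \ge p^*(0, b_{-i})$. The hypothesis supplies $p^*(0, b_{-i}) \ge v_i$, hence $p^*(b_i, b_{-i}) \ge v_i$ for every choice of $b_i$. Substituting into the utility expression, $v_i - p^*(b_i, b_{-i}) \le 0$, so agent $i$'s utility is at most $0$ under any reported budget.

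Finally I would observe that reporting $b_i = 0$ forces $c_i = 0$, since the spending always satisfies $0 \le c_i \le b_i$, and this yields utility exactly $0$. Because every strategy gives utility at most $0$ while reporting $0$ attains $0$, reporting a budget of $0$ is a best response. I do not anticipate a genuinely hard step here; the only points requiring care are invoking Proposition~\ref{prop:bpinc} in agent $i$'s own coordinate to get the uniform lower bound $p^* \ge v_i$, and noting that a zero reported budget pins the spending — and hence the utility — to exactly zero, which rules out any profitable deviation.
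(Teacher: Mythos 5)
Your proposal is correct and follows the same route as the paper's own (much terser) proof: both invoke the monotonicity of $p^*$ in agent $i$'s coordinate from Proposition~\ref{prop:bpinc} to conclude that the price is at least $v_i$ under any report, so investing yields nonpositive utility while reporting $0$ secures exactly $0$. You have merely spelled out the sign computation $c_i(v_i - p)/p \le 0$ and the fact that $b_i = 0$ forces $c_i = 0$, which the paper leaves implicit.
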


In experiments, generally we have that $p^*(0,b_{-i})$ is very close to the
$p^*(b_i,b_{-i})$. Our observation is that an agent's impact on the overall
price is very limited. This proposition essentially says that if an agent's
value is below (about) the current price, then she wants to leave (by setting the
budget to $0$).

\begin{proof}
    The price is nondecreasing in the agents' budgets. $p^*(0,b_{-i})$ is the lowest
    price $i$ faces by unilateral budget change. If this price is still at least
    her valuation, then she does not want to invest.
\end{proof}

Now we provide a sufficient condition for an agent to report her true maximum
budget. It is \emph{not as simple as}, for example, if $p^*(b^M_i,b_{-i})\le
v_i$, then an agent would report her actual maximum. For reporting the maximum
budget, an agent's valuation must be \emph{slightly} higher than the
$p^*(b^M_i,b_{-i})$. This is to ensure that the utility gain for buying more is
always greater than the utility loss caused by price increment for the existing
purchase.

\begin{proposition}\label{prop:maxout} If $v_i\ge
    \frac{p^*(H_i,b_{-i})}{1-a_i^M\frac{g+3}{1-g}}$, then agent $i$'s best
    strategy is to report her true maximum budget.

    Here, $H_i=\min\{b^M_i, b_{max}(\vec{b}_{-i})\}$, which represents the
highest ``effective'' budget for agent $i$. Any higher budget is essentially the same or
violates the budget constraint.  $a^M_i$ is $i$'s allocation when her budget is $H_i$.
\end{proposition}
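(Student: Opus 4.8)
The plan is to show that agent $i$'s utility, viewed as a function of her reported budget $b_i$ on the effective range $[0,H_i]$, is nondecreasing, so that reporting the top value $H_i$ is optimal; since any report at or above $b_{max}(\vec{b}_{-i})$ yields the same outcome, this is realized by reporting her true maximum $b^M_i$. In the region where $i$ is a winner who is not capped, her spending equals her budget, so with $p=p^*(b_i,\vec{b}_{-i})$ and $a_i=b_i/p$ her utility is
\[ u_i(b_i)=\frac{b_iv_i}{p}-b_i=a_iv_i-b_i. \]
Writing $p'=\partial p^*/\partial b_i$, direct differentiation gives
\[ \frac{\partial u_i}{\partial b_i}=\frac{v_i}{p}\bigl(1-a_ip'\bigr)-1, \]
so $u_i$ is locally nondecreasing exactly when $v_i(1-a_ip')\ge p$.

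The first step is to control $a_ip'$. By Proposition~\ref{prop:bpderivative} we have $p'\le\frac{g+3}{1-g}$, and $p'\ge 0$ by Proposition~\ref{prop:bpinc}, so $a_ip'\le a_i\frac{g+3}{1-g}$. It therefore suffices to establish the stronger inequality $v_i\bigl(1-a_i\frac{g+3}{1-g}\bigr)\ge p^*(b_i,\vec{b}_{-i})$ for all $b_i\in[0,H_i]$. Here I use implicitly that $1-a^M_i\frac{g+3}{1-g}>0$ (otherwise the hypothesis is vacuous), which, since $a_i\le a^M_i$, guarantees $1-a_i\frac{g+3}{1-g}>0$ throughout, so that dividing and multiplying preserves inequality directions.

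The second, and main, step is to propagate the hypothesis down to every smaller budget. The hypothesis is precisely the displayed inequality evaluated at $b_i=H_i$, namely $v_i\bigl(1-a^M_i\frac{g+3}{1-g}\bigr)\ge p^*(H_i,\vec{b}_{-i})$. The point is that $H_i$ is the worst case on both sides: the allocation $a_i=b_i/p^*$ is nondecreasing in $b_i$ by Proposition~\ref{prop:bpdivinc}, so $a_i\le a^M_i$ makes the left side largest for $b_i\le H_i$ (the coefficient of $a_i$ is negative), while the price is nondecreasing in $b_i$ by Proposition~\ref{prop:bpinc}, so the right side is smallest. Chaining these, for every $b_i\le H_i$ at which $i$ is an uncapped winner,
\[ v_i\Bigl(1-a_i\tfrac{g+3}{1-g}\Bigr)\ge v_i\Bigl(1-a^M_i\tfrac{g+3}{1-g}\Bigr)\ge p^*(H_i,\vec{b}_{-i})\ge p^*(b_i,\vec{b}_{-i}), \]
which yields $\partial u_i/\partial b_i\ge 0$ on the whole uncapped-winner interval.

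Finally I would stitch the regions together. For $b_i$ below the winner threshold $b_{min}(\vec{b}_{-i})$ of Proposition~\ref{prop:bminbmax} the spending is $0$ and $u_i=0$; at the moment $i$ first becomes a winner, the same chain (using $1-a^M_i\frac{g+3}{1-g}\le 1$) gives $v_i\ge p^*$, so her utility jumps from $0$ to a nonnegative value rather than dropping. Since $H_i\le b_{max}(\vec{b}_{-i})$, no capped region is entered before $H_i$, and on the intervening interval the monotonicity just proved applies. Hence $u_i$ is nondecreasing on $[0,H_i]$ and is maximized at $H_i$, establishing the claim. I expect the binding difficulty to be exactly this second step — certifying that the single inequality assumed at $H_i$ is the tightest constraint across the entire range — together with the minor care required at the non-differentiable winner threshold (where $u_i$ may be discontinuous, but only upward) and the verification that $1-a^M_i\frac{g+3}{1-g}>0$ so that Proposition~\ref{prop:bpderivative} can be invoked.
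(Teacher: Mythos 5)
Your proof is correct and takes essentially the same route as the paper's: both differentiate $u_i(b_i)=\frac{v_ib_i}{p^*(b_i,\vec{b}_{-i})}-b_i$ on the uncapped-winner interval and lower-bound the derivative by combining $a_i\le a_i^M$ (Proposition~\ref{prop:bpdivinc}), $\frac{\partial p^*}{\partial b_i}\le\frac{g+3}{1-g}$ (Proposition~\ref{prop:bpderivative}), and $p^*(b_i,\vec{b}_{-i})\le p^*(H_i,\vec{b}_{-i})$ (Proposition~\ref{prop:bpinc}), which is exactly the paper's chain in~\eqref{eq:maxout1} and~\eqref{eq:maxout2}. Your extra observations --- the needed positivity of $1-a_i^M\frac{g+3}{1-g}$ and the upward jump at the threshold $b_{min}(\vec{b}_{-i})$ --- merely make explicit details the paper leaves implicit.
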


In experiments, generally we have that $p^*(H_i,b_{-i})$ is very close to
$p^*(b_i,b_{-i})$.  $a_i^M$ is generally very small for most agents. For
example, if there are $3000$ agents in total, then we know for sure that at
most $1000$ agents can get allocations at least $0.1\%$. That means for the
remaining $2000$ agents, we have $a_i^M\le 0.001$. Let us consider $g=0.6$, so
$\frac{g+3}{1-g}=9$. For an agent among these $2000$ agents, if
her valuation is at least (about) the current price, divided by $0.991$, then
she wants to report her true maximum budget.

\begin{proof}
Due to Proposition~\ref{prop:bminbmax}, agent $i$ faces a minimum and a maximum
investment amount. If her true maximum budget is below the minimum amount, then
this agent is irrelevant. We only consider agents who can meet the minimum
investment amounts.  We use $u_i(b_i,b_{-i})$ to denote $i$'s utility. Let us
analyze the derivative of $u_i$ against $b_i$ when $b_i$ is identical to the
spending $c_i$. That is, we consider $b_i$ in between the minimum investment
amount and $H_i$.
\[
    u_i(b_i,b_{-i}) = \frac{b_i}{p^*(b_i,b_{-i})}(v_i-p^*(b_i,b_{-i})) = \frac{v_ib_i}{p^*(b_i,b_{-i})}-b_i
\]
\[
    \frac{\partial u_i(b_i,b_{-i})}{\partial b_i} = \frac{p^*(b_i,b_{-i})v_i-\frac{\partial p^*(b_i,b_{-i})}{\partial
    b_i}v_ib_i}{p^*(b_i,b_{-i})^2}-1
\]
\[
    =\frac{v_i}{p^*(b_i,b_{-i})}(1-\frac{b_i}{p^*(b_i,b_{-i})}\frac{\partial p^*(b_i,b_{-i})}{\partial
    b_i})-1
\]
\begin{equation}\label{eq:maxout1}
\ge \frac{v_i}{p^*(b_i, b_{-i})}(1-a_i^M\frac{g+3}{1-g})-1
\end{equation}
\begin{equation}\label{eq:maxout2}
\ge \frac{v_i}{p^*(H_i,b_{-i})}(1-a_i^M\frac{g+3}{1-g})-1
\end{equation}

In the last two steps, we relied on two facts. One is that due to
Proposition~\ref{prop:bpdivinc}, $a_i^M$ is the highest allocation $i$ can get.
The other is that the derivative is bounded according to Proposition~\ref{prop:bpderivative}.  \end{proof}


We compare the Gini mechanism's revenue against the \emph{first-best optimal
revenue}.  The first-best optimal revenue is calculated as an optimization
problem, assuming that we know all the agents' private valuations and private
budgets. Given a price $p$, we filter out all agents who can afford $p$, and
then derive the Gini-index-minimizing allocation based on the true maximum
budgets. If the Gini index is at most the Gini cap, then $p$ is an achievable
price.  We solve for the highest $p$.

\begin{theorem} Under the following assumptions, the Gini
    mechanism's equilibrium revenue approaches the first-best
    optimal revenue with probability $1$,  as the number of agents goes to
    infinity.
    \begin{itemize}
        \item The agents' private valuations are drawn \emph{i.i.d.} from a distribution with
            a value upper bound $U$. For any $\epsilon>0$, $Prob(v\in [U-\epsilon, U])>0$.
            This basically assumes that the upper bound is a meaningful upper bound, in the sense
            that there is a positive probability to draw a value close to the upper bound.
        \item The agents' private budgets are drawn \emph{i.i.d.} from a distribution with
            a positive expectation.
        \item There is a constant minimum winner number. Above that, all winner numbers are allowed.
    \end{itemize}
\end{theorem}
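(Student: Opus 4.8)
My plan is to reduce everything to a statement about the single number the mechanism outputs, namely the \emph{price}. Since one divisible coin is sold, revenue equals the price, so it suffices to show that the equilibrium price $p^{\mathrm{eq}}$ and the first-best price $p^{\mathrm{FB}}$ both converge almost surely to the valuation upper bound $U$. I would first pin down $p^{\mathrm{FB}}\to U$. The upper bound $p^{\mathrm{FB}}\le U$ is immediate, since only agents with $v_i\ge p$ are willing to participate and $v_i\le U$. For the matching lower bound, fix $\epsilon>0$; the ``meaningful upper bound'' assumption gives $q:=\Pr(v\ge U-\epsilon)>0$, so by the law of large numbers there are $\approx nq\to\infty$ willing agents at price $U-\epsilon$, whose total budget diverges because budgets are i.i.d.\ with positive expectation. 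Using the flexible winner number (any $k_0\le k\le n$, e.g.\ $k=\ceil{\sqrt n}$), I would exhibit an explicit equal-share allocation: pick $k$ willing agents each able to afford $1/k$ (possible a.s.\ because the affordability threshold $p/k\to 0$ while $\Pr(b>0)>0$), giving Gini index $0\le g$. Hence $U-\epsilon$ is first-best-feasible for large $n$, so $p^{\mathrm{FB}}\ge U-\epsilon$.

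The heart of the proof is to sandwich the equilibrium price against $U$. For the upper bound I would argue that every winner has $v_i\gtrsim p^{\mathrm{eq}}$: a winner with $v_i<p^{\mathrm{eq}}$ earns negative utility, and Proposition~\ref{prop:pullout} (together with the negligibility of a single agent's price influence for all but a bounded set of agents) forces her to report $0$; since $v_i\le U$ for the remaining winners, $p^{\mathrm{eq}}\le U+o(1)$. For the lower bound I would suppose, for contradiction, that $p^{\mathrm{eq}}<U-\epsilon$ and consider the strong set $W=\{i:v_i\ge U-\epsilon/2\}$, of size $\approx nq'\to\infty$. The key claim is that all but $O(1)$ members of $W$ max out in equilibrium. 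To see this I would invoke Proposition~\ref{prop:maxout}: its threshold $p^*(H_i,b_{-i})/(1-a_i^M\frac{g+3}{1-g})$ lies below $U-\epsilon/2\le v_i$ whenever $a_i^M$ is small and $p^*(H_i,b_{-i})$ is close to $p^{\mathrm{eq}}$. I would control $a_i^M$ by a counting argument on the Gini cap: in any feasible allocation at most about $g/c$ winners can hold a share as large as $c$, since a larger share among the top agents forces the Gini index above $g$ once $k$ is large; hence only $O(1)$ agents have $a_i^M$ bounded away from $0$. Proposition~\ref{prop:bpderivative} and continuity (Proposition~\ref{prop:bpcont}) similarly confine the individual influence $p^*(H_i,b_{-i})-p^{\mathrm{eq}}$ to $o(1)$ for all but $O(1)$ agents.

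Granting the claim, the reported profile contains at least $|W|-O(1)\approx nq'$ agents of $W$ at their true budgets. By monotonicity of the price in each coordinate (Proposition~\ref{prop:bpinc}), $p^{\mathrm{eq}}$ is at least the Gini-mechanism price of this sub-profile alone; and by the same equal-share feasibility argument used for $p^{\mathrm{FB}}$ (dropping $O(1)$ agents is harmless because budgets remain plentiful), that sub-profile admits a feasible allocation at price $U-\epsilon/2$, so its price is at least $U-\epsilon/2$. This yields $p^{\mathrm{eq}}\ge U-\epsilon/2>U-\epsilon$, contradicting the assumption. Letting $\epsilon\to 0$ gives $p^{\mathrm{eq}}\to U$ almost surely, matching $p^{\mathrm{FB}}$, whence the revenues coincide in the limit.

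The step I expect to be the main obstacle is the claim that all but boundedly many strong agents max out, because it couples two quantitative facts that the earlier propositions only control in the worst case. The per-agent allocation $a_i^M$ must be shown small for almost all agents (the Gini-cap counting argument above), and the per-agent price influence must be shown to \emph{vanish} so that $p^*(H_i,b_{-i})$ is genuinely within $o(1)$ of $p^{\mathrm{eq}}$; Proposition~\ref{prop:bpderivative} only supplies the constant bound $\frac{g+3}{1-g}$, so I would need to combine it with the equilibrium structure (a $\Theta(n)$-sized set of agents capped at $C=\Theta(1/n)$) to upgrade ``bounded influence'' to ``$o(1)$ influence.'' Finally I would verify that the $(\epsilon_1,\dots,\epsilon_n)$-equilibrium slack is asymptotically negligible, so that the exact best-response conditions of Propositions~\ref{prop:pullout} and~\ref{prop:maxout} may be applied up to vanishing error.
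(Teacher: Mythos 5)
Your overall skeleton is the same as the paper's: revenue equals price, squeeze both the first-best price and the equilibrium price toward $U$, and obtain the equilibrium lower bound by forcing high-valuation agents to max out (via Proposition~\ref{prop:maxout}) so that their budgets support a Gini-index-$0$ equal-share allocation that would push the price up. Your first-best analysis and the $p^{\mathrm{eq}}\le U$ upper bound are fine (the paper does not even bother with the latter). But the step you yourself flag as the ``main obstacle'' is a genuine gap, not a technicality: you need all but $O(1)$ members of a $\Theta(n)$-size strong set $W$ to max out, and to apply Proposition~\ref{prop:maxout} you need $p^*(H_i,b_{-i})$ within $o(1)$ of $p^{\mathrm{eq}}$ for all but $O(1)$ agents. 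Proposition~\ref{prop:bpderivative} only supplies the constant slope bound $\frac{g+3}{1-g}$, and the increment $H_i-b_i$ need not be small --- a strong agent could be sitting at $b_i=0$ in the putative low-price equilibrium with a large true budget, so her unilateral move to $H_i$ can shift the price by a constant. No proposition in the paper delivers the ``upgrade'' to vanishing influence, and your proposal does not construct it. A secondary soft spot: your Gini-cap counting controls the number of agents with a large \emph{equilibrium} share (trivially at most $1/c$ agents can hold share $\ge c$; the $g/c$ factor is not right), but $a_i^M$ is the share in a different, counterfactual profile $(H_i,b_{-i})$, so the count does not transfer as stated.

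The paper's proof contains exactly the missing idea, and it makes both difficulties evaporate: do not try to max out a linear-size set. Since budgets have positive expectation, there is a constant $n_0$, at least the minimum winner number, with $Prob(b\ge U/n_0)>0$; almost surely, for large $n$, there exist $n_0+n_1$ agents with valuation above $U-\epsilon$ \emph{and} true budget at least $U/n_0$ ($n_1$ another constant). By pigeonhole, at least $n_0$ of them have equilibrium allocation at most $\frac{1}{n_1}$. Now apply the derivative lower bound from the proof of Proposition~\ref{prop:maxout}, Expression~\eqref{eq:maxout1}, \emph{at the equilibrium budget itself}, where the price appearing in the bound is exactly $p^{\mathrm{eq}}$ --- so no per-agent influence estimate is ever needed. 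If $p^{\mathrm{eq}}<\frac{U-\epsilon}{1-\frac{1}{n_1}\frac{g+3}{1-g}}$, each of these $n_0$ agents has strictly positive marginal utility and hence must be maxed out in equilibrium; their $n_0$ budgets, each at least $U/n_0$, by themselves support an equal-share allocation (Gini index $0$) with winner number $n_0\in K$ at any price up to $U$, so the mechanism would push the price strictly higher --- contradiction. Because the witnesses have \emph{constant} budget lower bound, a constant number of them suffices, which is precisely what your construction lacks when $W$ consists of agents with arbitrarily small budgets. Letting $n_1\to\infty$ and $\epsilon\to 0$ gives $p^{\mathrm{eq}}\to U$, and $U$ bounds the first-best revenue, completing the proof without the uniform small-influence claim.
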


\begin{proof} We find a constant integer $n_0$ so that $Prob(b \ge U/n_0)>0$ and $n_0$ is
    at least the minimum winner number.
    $n_0$ can always be found.  There is a positive probability to draw an
    agent with valuation above $U-\epsilon$ and budget above $U/n_0$. As the
    number of agents goes to infinity, the probability of drawing $n_0+n_1$
    such agents equals $1$, where $n_1$ is another constant integer. Among
    these $n_0+n_1$ agents, for at least $n_0$ of them, the allocation is at
    most $\frac{1}{n_1}$. (For example, there are at most $1000$ agents who
    allocation is at least $\frac{1}{1000}$.) If the equilibrium price is less
    than $\frac{U-\epsilon}{1-\frac{1}{n_1}\frac{g+3}{1-g}}$, then according to
    Expression~\eqref{eq:maxout1}, these $n_0$ agents would max out their
    budgets under the equilibrium. Each of these $n_0$ agents has a
    budget that is at least $U/n_0$.  So at the equilibrium price, only
    allocating to these agents alone results in an allocation with a Gini index
    of $0$, which means that any equilibrium price strictly lower than
    $\frac{U-\epsilon}{1-\frac{1}{n_1}\frac{g+3}{1-g}}$ is not possible. (If it is lower, then
    it should be raised due to the existence of an allocation with $0$ Gini index.) Given
    that $\epsilon$ and $n_1$ are arbitrary constants, we have that the
equilibrium price can be made arbitrarily close to $U$. $U$ is obviously an upper
bound on the first-best optimal revenue.  \end{proof}

\section{Experiments} In this section, we evaluate the Gini mechanism using
real ICO data.  The selection criteria for our dataset are mentioned in
Section~\ref{sec:intro}. Our dataset is compiled based on previous Dutch
auction based ICOs.  Each ICO dataset contains the agents' budgets and their
entering prices to the Dutch auction. In an ICO Dutch auction, even if an
agent enters early, she still pays according to the ending price. So for agents
with small budgets (who cannot buy all the coins and stop the auction),
entering the auction when the price meets the valuation is a reasonable
strategy. For this reason, we use the entering prices as the agents'
valuations. In our experiments, we also included a list of generated bids. The
reason for this is that there are many observing agents whose data are missing
from our datasets.  An observing agent is someone who has low valuation. The
auction ended before such an agent logs her bid.  If the dataset contains $n$
agents, then we add in another $n$ generated agents.  The generated agents'
budgets are sampled from real budgets, and the valuations are just the ending
price times a random value from $0$ to $1$ (drawn according to the uniform
distribution).

We numerically compute an approximate Nash equilibrium. We first calculate the
first-best optimal price and use it to initialize the budget profile. An agent
reports the true maximum budget if her valuation is at least the first-best
optimal price, and reports $0$ otherwise. From this point on, we go through the
agents one-by-one and have each agent update the budget to her best response.
We stop when an equilibrium has been reached.

Our experiments involve thousands of agents, which seems very daunting when it
comes to equilibrium computation.  Fortunately, with the help of
Proposition~\ref{prop:pullout} and Proposition~\ref{prop:maxout}, we test these
two sufficient conditions and find that most agents either report the maximum
budget or $0$.

\begin{figure}
\centering
\includegraphics[width=\textwidth]{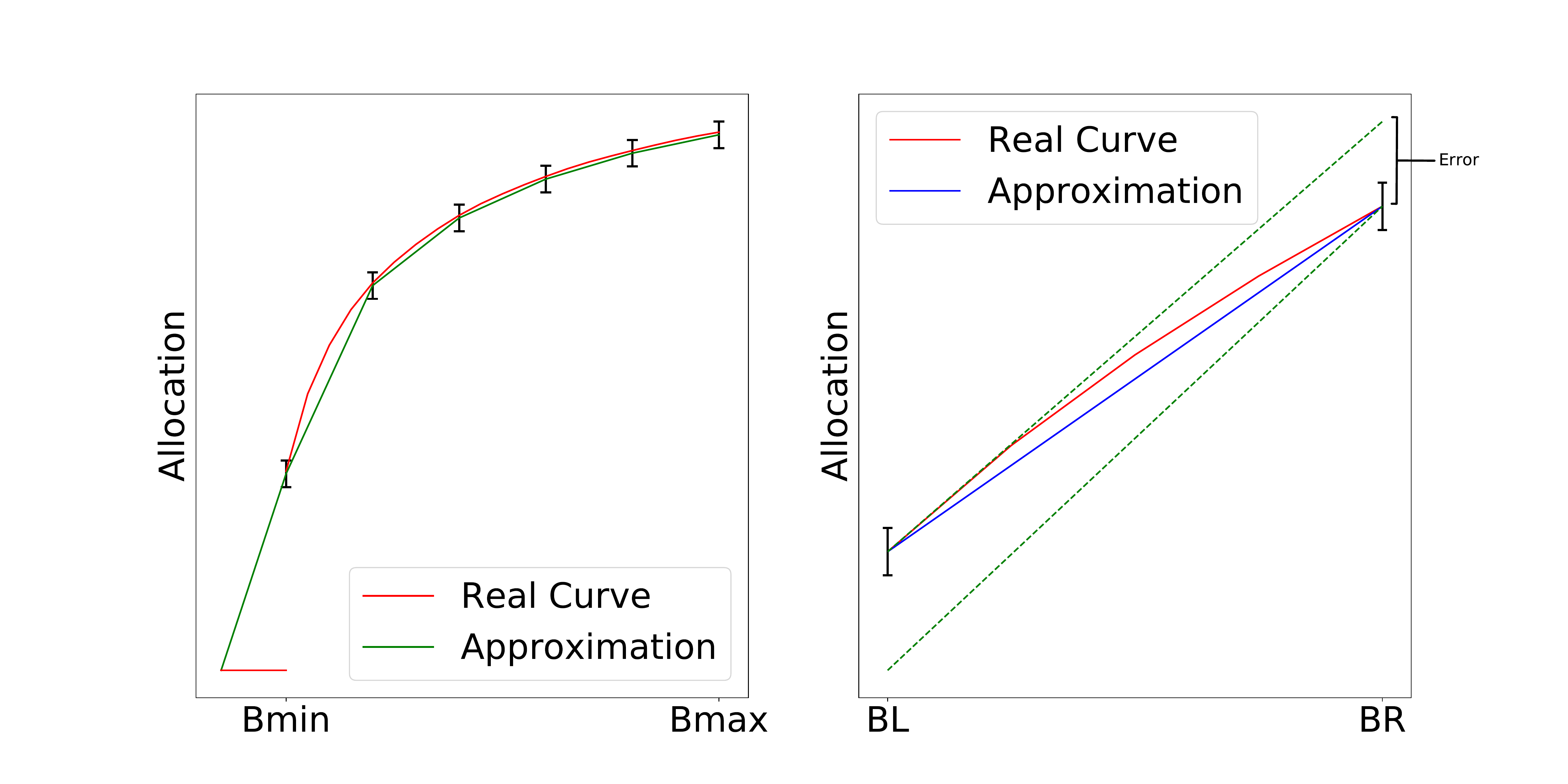}
\caption{Approximation for the Allocation Curve}\label{fig:only}
\end{figure}

For a handful of agents who do prefer to report a budget that is somewhat in
between $0$s and their maximum budgets, we use numerical simulation to
calculate their best responses. We focus on a specific agent $i$.
Figure~\ref{fig:only} (Left) shows $i$'s real allocation curve.  When $i$'s
budget is below a minimum investment amount (denoted as $B_{min}$ in the
figure), her allocation is $0$. When $i$'s budget grows above a maximum amount
(denoted as $B_{max}$ in the figure), her allocation stays the same.  In
between, her allocation curve is \emph{approximately concave}: by investing
more, $i$ pushes up the price, so the marginal gain gets smaller and smaller.
The real allocation curve is unknown to us, so we resort to its approximation
in our equilibrium calculation.  We use piecewise straight lines to approximate
the real allocation curve, by sampling a few allocation values and then connect
them together. We call the sampled points the turning points in our piecewise
linear curve. To ensure that we end up with a concave curve, we use a linear
program to move the turning points slightly up or down. For example, let the
$y_i$ be the y-coordinates of the turning points ($y_0=0$). The linear program
is constructed as follows (the $y_i$ are constants and the variables are the
$y_i'$):

\begin{equation}
\begin{aligned}
& \text{min}
& & \delta \\
& \text{s.t.} & &  y_i-\delta \le y_i'\le y_i+\delta\\
    & & &  y_{i+1}'+y_{i-1}'\le 2y_{i}' \\
    & & &  y_{i}' \le y_{i+1}' \\
    & & &  y_{0}' = 0 \\
\end{aligned}
\end{equation}

It should be noted that in our approximation, we used a non-horizontal straight
line in between $0$ and $B_{min}$. This is fine because both before or after
the approximation, the agents' best strategies do not involve any budget
strictly in between $0$ and $B_{min}$.

With a concave allocation curve, the agent's utility function is also concave.
For example, when the budget is in between $B_{min}$ and $B_{max}$, the utility
curve is just the allocation curve times a constant (the agent's valuation),
then subtracts a linear term (the payment). According
to~\cite{Rosen1965:Existence}, for a $n$-person game with concave utility
function, a pure strategy Nash equilibrium always exist.  Our experiments show
that after the approximation, a pure strategy Nash equilibrium (a deterministic
budget profile) is very easy to find.

Our approximation introduces two sources of errors. First, we have the error
$\delta$ from the linear program. Then, as shown in
Figure~\ref{fig:only} (Right), there is error due to using a
straight line to approximate the real curve. Let $BL$ and $BR$ be the x-coordinates of
two adjacent turning points. The price at $BL$ is lower than the price at $BR$.
For any budget value $B$ in between $BL$ and $BR$, the allocation is in between
$B/p(BR)$ and $B/p(BL)$. The gap is at most $B\frac{p(BR)-p(BL)}{p(BL)p(BR)}$,
which is maximized when $B$ approaches the larger budget $BR$.  We go
through every adjacent pair of turning points to get the largest error.  Given these
two sources of errors, our computed Nash equilibrium is not an exact
equilibrium. Suppose when we calculate the best response for agent $i$, the
maximum error in terms of $i$'s utility is $\epsilon$, then we can only say this
agent's response is at most $2\epsilon$ away from the best response (we could
be underestimating the actual best response and overestimating the approximate
best response). 


The experiments are conducted using parameter $g=0.6$ and $K=\{\ceil{0.5n},
\ceil{0.6n}, \ceil{0.7n}, \ceil{0.8n}, \ceil{0.9n}, n\}$.
Table~\ref{tb:experiment} shows the decomposition of agents for different ICOs.
The data format is ``total number of agents = agents who report 0 based on
Proposition~\ref{prop:pullout} + agents who report the true maximum budget
based on Proposition~\ref{prop:maxout} + agents with nontrivial strategies.''
As shown in the table, the number of agents with nontrivial strategies are only
a handful.  Table~\ref{tb:experiment1} compares the equilibrium revenue under
the Gini mechanism (Gini Rev.) against the first-best optimal revenue (Opt.
Rev.).\footnote{Gini Rev. is higher than Opt. Rev. for Metronome due to
numerical error.} The unit is Ether.  For Polkadot, due to numerical error, one
agent keeps changing her budget back and forth while the other agents do not
change their budgets. The Gini revenue is only changed at the second digit
after the decimal point due to this agent's back and forth.  Err.  represents
the maximum utility error. That is, under our equilibrium, an agent's utility
is at most this value away from the best response utility.  The unit of error
is Ether, so it can be quite significant in its absolute value. For Gnosis,
$0.141$ Ether is worth about $40$ USD in November 2019.  Err./Budget is the
maximum ratio between the utility error and an agent's true maximum budget. For
Gnosis, this value is $0.145\%$. Errors for the other ICOs are significantly
better.

\begin{table}
    \caption{Agent Decomposition}
\centering
\begin{tabular}{l l}\label{tb:experiment}%
Raiden & 8574=1663+6907+4 \\
Metronome & 2884=384+2496+4 \\
Polkadot & 5910=1522+4377+11\\
GoNetwork & 5210=1351+3854+5\\
Gnosis &  1518=304+1209+5\\
\end{tabular}
\end{table}

\begin{table}
    \caption{Revenue and Error}
\centering
\begin{tabular}{ l c c c r }\label{tb:experiment1}
    & Gini Rev. & Opt. Rev. & Err. & Err./Budget \\
    \hline
    Raiden & 42177 & 42177 & 1.84e-3 & 2.63e-4\\
    Metronome & 8516 & 8515 & 6.85e-3 & 4.38e-4\\
    Polkadot & 104689 & 104735 & 9.75e-3 & 2.14e-4\\
    GoNetwork & 14352 & 14356 & 2.86e-3 & 2.60e-4\\
    Gnosis & 104117 & 104117 & 1.41e-1 & 1.45e-3\\
\end{tabular}
\end{table}
\newpage

\bibliographystyle{plain}
\bibliography{/home/mingyu/nixos/newmg.bib}
\end{document}